

\documentclass[final,3p,times]{elsarticle}


\usepackage{amssymb}

\usepackage{lineno}



\biboptions{authoryear}





\usepackage[bookmarks=false]{hyperref}
    \hypersetup{colorlinks,
      linkcolor=blue,
      citecolor=blue,
      urlcolor=blue}


\usepackage{natbib}
\usepackage{url}
\usepackage{multirow}
\usepackage{algorithm}
\usepackage{algorithmic}
\usepackage{times}
\usepackage{bm,enumerate,amsmath,amssymb,amsthm}

\DeclareMathOperator*{\argmax}{\textrm{argmax}}

\newcommand{\bvec}[1]{\mbox{\boldmath $#1$}}
\newtheorem{definition}{Definition}

\newtheorem{prop}{Proposition}




\begin{document}

%
%
%

\begin{frontmatter}





\title{Dynamic traffic resources allocation under elastic demand of users with space--time prism constraints}


\author[a]{Keiichiro Hayakawa} 
\author[b]{Eiji Hato}

\address[a]{Toyota Central R\&D Labs., Inc., 41-1, Yokomichi, Nagakute Aichi 480-1192, Japan}
\address[b]{The University of Tokyo, Hongo, Bunkyo-ku, Tokyo, Japan}

\begin{abstract}

We present a conceptual framework for the dynamic traffic resources allocation problem in a situation of elastic demand among customers. We introduce an activity-based model to express customers' successive actions and transfers in order to capture the essential aspect of transfers as a derived demand. We focus on the decision-making of customers, in that they only use a mobility service when their space--time prism constraints represent the worst case. 
Under the setting with such elastic demand, we characterize a class of dynamic traffic resources allocation mechanisms that strictly keep space--time prism constraints of users and capacity constraints of traffic resources. Within the class of mechanisms, we show the optimal mechanism that maximizes discounted social welfare and show an exact solution algorithm for both, myopic and non-myopic settings, using the zero-suppressed binary decision diagram. We also present approximation algorithms that are still included in the class of mechanisms.
In numerical studies, we showed that our proposed algorithm works effectively in settings with high rejection rates, meaning that this algorithm can be used to focus on the behavior of latent customers who have not used mobility services so far. The proposed non-myopic algorithm is suitable to design services for limited small number of customers, for example a car-sharing services for prescribed members, while the proposed myopic algorithm is suitable for designing services for a large unspecified number of customers, for example, a ride-sharing service in a large city. 
\end{abstract}

\begin{keyword}
Activity-based; Dynamic traffic allocation; Space--time prism constraints; Strict capacity constraints; System optimal




\end{keyword}
\cortext[cor1]{Corresponding author: Keiichiro Hayakawa. E-mail: kei-hayakawa@mosk.tytlabs.co.jp}
\end{frontmatter}




%
\section{\label{sec_intro}Introduction}									
%
%

In many settings, traffic operators have limited traffic resources, such as airplanes, buses, trains, and taxis. Conventionally, these traffic resources were controlled under a traffic operator and customers had to buy tickets from the operator. When users traveled, they had to arrange for tickets from various traffic operators to complete their series of trips. 
However, recently, there are often cases in which multiple operators share traffic resources and provide a combinatorial set of services to customers, and this is often called Mobility as a Service (MaaS).
The operators of these services aim to make an efficient match between limited traffic resources and heterogeneous customers.
In many cases, the booking systems for such kinds of services are developed as a mobile app, and are implemented using mobile phones that are strongly connected to each user. In fact, there are many mobile apps that recommend restaurants, driving routes, and even the time taken to rest based on the habitual behavior of the users that are logged into the app within a long-term history. 
A traffic service operator can allocate traffic resources efficiently if they can use information on the heterogeneity of customers.
In contrast, customers also benefit greatly from such services. Using such services, they can not only make arrangements for total trips by booking only once, but can also receive recommendations for activities and transfers, considering their habitual activities. Receiving such recommendations, users may change their behavior, which may enrich their lives.
Considering that, the demand for such services is considered to be elastic, customers may change their destination on a trip depending on the recommendations they find, and may even make a decision on whether to go out or stay at home, while availing the service. Thus, the demand greatly depends on the quality of the mobile services.
In this paper, we discuss the mechanisms and algorithms of such mobility services.

In such applications, customers' space--time prism constraints \citep{hagerstraand1970people} on their activities have a significant meaning. Customers have a series of constraint sets of \emph{time} and \emph{location} that they have to keep. For instance, in many cases, people have to return home before or at an appropriate time at night, every day. Workers have to go to work during their prescribed working hours on weekdays and thus, all activities other than the work itself, for example, going shopping or eating dinner with friends, should be assigned to either their morning or evening commutes on weekdays, while this constraint vanishes in the weekends. In the real world setting, the space--time prism constraints of users are often quite strict. For example, users cannot be late for work because being late imposes a great penalty on the users.
To address such constraints, the activity-based travel analysis \citep{kitamura1996sequenced} is effective, which regards travel demand as a derived demand to realize a series of activities that have occurred at different locations. Using this approach, customers can be considered as agents that seek to maximize their utility under the given space--time prism constraints.

Based on this policy, \citet{lam2001activity} proposed an activity-based dynamic traffic assignment model. They define \emph{activity utility function} for each activity of the users and consider the users as utility maximizers, based on the utility function and travel costs involved. Given these factors, they formulated the steady-state user equilibrium (UE) under the combined activity and route choices of users and proposed a heuristic solution algorithm using the space--time expanded network. This approach handles the activity-based elastic demand at points in which the users select a set comprising the location of the activity and the route to realize the activity, simultaneously. 
However, both, space--time prism constraints of users and the capacity constraints of traffic resources, are expressed by cost functions, and are thus, not strictly treated. 
From the viewpoint of the service operator, it is worthwhile to know the system optimal (SO) state under strict constraints.

In contrast, a considerable strand of the literature related to the dial-a-ride problem (DARP) \citep{cordeau2007dial} has discussed traffic resources allocation that aims to minimize costs of the service operator under strict capacity constraints. In these works, the optimization problems are typically formulated as an integer linear programming (ILP) and a wide range of solution algorithms are provided \citep{ho2018survey}. However, the problems are formulated as optimization problems of the operator, given the deterministic nature of stochastic trip-based demand of customers, and thus, they neither consider the connectivity of a series of transfers, nor the change in demand reacting to the service quality. 

In this paper, we consider the setting in which user agents on mobile apps make reports on the demands and constraints of customers to the central server of the service operator that allocates a series of trips and activities to customers.
We especially focus on the situation in which the demand of customers is elastic, specifically, where customers make a decision to use the service only when their demands and constraints are satisfied within the entire series of trips and activities. For example, a user uses the mobility service to attend an event held at a museum at a certain time, to go shopping after that, and to return home by a certain deadline, and does not use the service otherwise.
The operator executes a combined activity and trip (e.g., route, traffic mode, etc.) assignment considering the elastic demand of customers.
To address this, we introduce a framework that aims to maximize social welfare under elastic demand. The framework is based on a state transition model with normalized time-step. Thus, sequential decision-making of the operator is described consistently, on the basic economic theory.
Within the framework, we characterize a class of mechanisms that strictly keep space--time prism constraints of users and capacity constraints of traffic resources. By using this class of mechanisms, the \emph{floating booking system} of mobility services can be realized. Unlike common first-come first-served booking systems, by which early users have priority over any later arrivals, it rationally reallocates previously assigned traffic resources to late-coming high-value users, while guaranteeing the worst case service quality for all customers that it once accepts. It also provides users options to change their trips on the way.
To realize this, we propose algorithms that use the zero-suppressed binary decision diagram (ZDD) \citep{minato1993zero,minato2001zero} that provides a framework with which a sparse choice-set in a discrete state space derived from various combinatorial problems, is represented compactly and set operations related to the choice-set are computed efficiently. 
\citet{radivojevic1996new} shows that graph algorithms are superior to algorithms based on ILP in settings where resource-constrained scheduling problems with non-linear constraints and the computation cost are significantly improved using the ZDD. \citet{bjorkman2013solving} shows various applications that can be solved efficiently by using the ZDD. 
We take this approach to solve the traffic resources allocation problem in MaaS settings.

Overall, we make the following novel contributions. 
\begin{itemize}
\item We characterize, for the first time, a class of dynamic activity-based traffic resources allocation mechanisms named \emph{RC (Resources Customers)-feasible} mechanisms that guarantee the satisfaction of both, space--time prism constraints of customers and capacity constraints of traffic resources at any time, in its sequential decision-making. Within this class of mechanisms, we introduce the \emph{RC-optimal} mechanism that has the objective function that maximizes the discounted social welfare each time.
\item We propose a floating booking system that allocates traffic resources efficiently by rationally reallocating previously assigned traffic resources to late-coming high-value customers. We provide exact solution algorithms for the RC-optimal mechanism using the ZDD, and provide approximation algorithms that are still RC-feasible.

\item We numerically show that our proposed solution algorithm can keep more than trillions of combinatorial trip options of current and future agents in rational computational time. We also numerically explore the trade-off between allocation efficiency and computational costs of our proposed wide range RC-feasible algorithms. 

\end{itemize}

The remainder of this paper is organized as follows. In Section~\ref{sec_review} we organize the related works in the literature. In Section~\ref{sec_overview}, we show the system overview of the considered mobility service, describing the model of users and the service operator. In Section~\ref{sec_mechanisms}, we formulate the sequential decision-making of the operator that aims to maximize social welfare and characterize the RC-feasible and RC-optimal mechanisms. In Section~\ref{sec_algorithm}, we provide the solution algorithm for both, myopic and non-myopic settings. Using ZDD, we show the exact algorithm for the RC-optimal mechanism, as well as show the approximation algorithms that are still RC-feasible. 
In Section~\ref{sec_numerical}, we numerically evaluate our proposed mechanism and discuss the selection method of myopic or non-myopic algorithm and approximation algorithms depending on the applications. Finally, in Section~\ref{sec_discussions}, we conclude the work and discuss the potential direction of the extension of this work.

\section{\label{sec_review}Literature review}									
%
%

In this section, we organize the related works in literature. Specifically, we focus on four fields, namely, the dial-a-ride problem (DARP), traffic assignment with capacity constraints, the multi-agent path finding problem (MAPF), and the activity-based model. 

\subsection{Dial-a-ride problem (DARP)}

A considerable strand of the literature has discussed traffic resources allocation that aims to minimize costs of the service operator under strict capacity constraints. For scheduling fleet management, a problem in which scheduling a set of tasks and routes for vehicles based on fully given information of demand and strict time constraints, is well-known as the Pickup and Delivery Problem with Time Windows (PDPTW) \citep{dumas1991pickup}. In this framework, \citet{horn2002fleet} proposed myopic solutions in dynamic situations in which real-time outputs should be determined based on real-time inputs.
For share-ride systems, DARP \citep{cordeau2007dial} is well-known, which is a generalization of PDPTW, that includes additional constraints related to transport passengers rather than the fleet. Specifically, it aims to provide a cost-minimum traffic operation, given multiple users' trip requests, and the related aspects of origin, destination, and time-window under limited traffic resources. This problem is typically formulated as ILP and the exact solution is proved to be NP-hard. Thus, a noticeable strand of the literature has proposed heuristic algorithms, which are extensively reviewed by \citet{ho2018survey}. As pointed out by \citet{ho2018survey}, many works with regard to DARPs discuss static situations and thus, much effort is required to address dynamic DARPs.
On the other hand, dynamic ridesharing systems have been paid much attention, recently \citep{pelzer2015partition}. This problem is similar to DARPs, but differ in some ways, as pointed out by \citet{agatz2012optimization}.

Many works in these fields have proposed myopic solutions in dynamic situations, on the lines of \citep{horn2002fleet} in PDPTW. A trivial base-line benchmark of such myopic algorithms is the first-come first-served (FCFS) algorithm, in which the system myopically determines the resources allocation each time, upon accepting user requests. To achieve more efficient allocation, requests asked within a certain time slice are collected in order to explore efficient matching within these requests in many works. Some of these have proposed graph-based algorithms to handle non-linear constraints of combinatorial requests \citep{kamar2009collaboration,santi2014quantifying,alonso2017demand}. In contrast, \citet{sayarshad2015scalable} proposed non-myopic heuristic algorithms to solve the dynamic dial-a ride and pricing problems by employing the Markov decision process (MDP) to explore the look-ahead policy. 

However, all of these works consider trip-based demands, and thus, cannot treat the correlation of successive trips. In addition, this class of problems cannot represent situations where the demand may change, depending on the service quality determined by the operator.
They assume fixed demand or stochastic demand for each time and location, and do not consider the elastic demand derived from the quality of mobility services, and the achievable activities resulting from the services. 

\subsection{Traffic assignment with capacity constraints}

In contrast to the DARP that optimizes traffic resources under given demand, the field of traffic assignment discusses customer behavior under given traffic resources. Specifically, many works discuss UE states under given traffic networks, assuming the stochastic discrete choice models of customers. In these works, the cost of transferring each link in the network is expressed by a link cost function that is (weakly) monotonically increasing with the number of customers using the same link simultaneously. The maximum capacity of the link is not strictly considered. To address this problem, \citet{nie2004models} characterized a traffic assignment problem with capacity constraints that considers the UE state on networks with link capacity constraints. 

From the viewpoint of the service operator of mobility services, it is interesting to discuss optimal traffic resource allocation to maximize social welfare, or profit, considering customer behavior that has been described in the traffic assignment problem. For instance, \citet{allsop1974some} formulate traffic signal control as a bi-level problem in which the upper level problem expresses the traffic signal control that optimizes the objective function under given demand, while the lower level problem expresses the user equilibrium states under the given traffic signal control. Solving this bi-level problem iteratively, we can obtain the solution called \emph{Nash-Cournot Equilibrium} state. On the other hand, \citet{fisk1984game} proposed to solve this bi-level problem simultaneously. Specifically, a traffic manager optimizes their policy, incorporating the reaction of the driver to the traffic control system as constraints. This approach is classified in the \emph{Stackelberg game}, and formulated as a mathematical programming problem with equilibrium constraints (MPEC). The solution to this form is considered as SO states given the behavioral constraints of customers.
\citet{akamatsu2017tradable} proposed a demand management for static conditions as a combinatorial auction in which the SO state on networks with link capacity constraints coincide with UE states under a well-defined incentive design. The discussion is based on the Vickrey--Clarke--Groves (VCG) mechanism \citep{vickrey1961counterspeculation,clarke1971multipart,groves1973incentives} that consists of a traffic control achieving SO state and a well-defined pricing algorithm.
However, \citet{akamatsu2017tradable} only considers static settings and the extension to the dynamic settings is not trivial. Moreover, it assumes trip-based demands, and thus, cannot treat the correlation of successive trips and the related sequential decision-making of customers.

To address these problems, in this paper, we first focus on the SO states in dynamic settings, given the elastic demand with behavioral constraints of customers expressed by space--time prism constraints. Specifically, we pursue the mechanism that maximizes expected discounted social welfare, which is considered the SO state in dynamic settings in the basic economic theory\citep{rust1994structural}.
In combination with a well-defined incentive design, our proposed mechanism can also provide SO states as a result of \emph{Stackelberg game}, which we discuss in Section~\ref{sec_discussions}.

\subsection{Multi-agent Path Finding Problem (MAPF)}

The MAPF problem is a problem seeking paths of multiple agents that minimize the sum of the path cost of all agents without collision \citep{silver2005cooperative}. The MAPF is regarded as a generalization of the single-agent path finding problem and thus, the graph-based search algorithm such as $A^{\ast}$ is commonly used as a solution algorithm, while ILP or mixed integer program (MIP) are commonly chosen in DARP and traffic assignment problems, as stated earlier.
In the MAPF problems, the state transition of multiple agents is described with normalized time. Specifically, the state consists of position and the time-step, where the space is divided into grids, and each grid can be occupied at most by one agent at a time step to avoid collision. If we regard this as a strict capacity constraint of traffic resources, the problem that we consider in this paper can be regarded as a generalization of the MAPF. Indeed, \citet{ma2017lifelong} propose a lifelong algorithm within the online pickup and delivery setting based on MAPF.
As pointed out by \citet{amir2015multi}, the MAPF is formulated consistently on basic economic theory, and thus, can be generalized to combinatorial auction that includes not only the common cooperative MAPF, but also the non-cooperative MAPF. To obtain the optimal states in the setting, \citet{amir2015multi} propose an algorithm using multi-valued decision diagrams (MDDs) \citep{srinivasan1990algorithms} to represent a massive set of plans compactly, which is a family of the binary decision diagrams \citep{akers1978binary}, that are similar to the ZDD that we adopt in this paper.

Although the MAPF is interesting in its consistency in the basic economic theory, and is suitable to describe sequential decision-making on part of agents, many works related to MAPF only consider the myopic algorithms under fixed objectives. In this paper, we describe the dynamic traffic allocation problem in a similar manner, with MAPF, and discuss the sequential decision-making of the service operator aiming to maximize social welfare.

\subsection{Activity-based model}

There are earlier studies on activity-based travel analysis that regard a trip as the derived demand of users' activities, and try to model travelers' time usage within their daily activities. 
\citet{axhausen1992activity} classified activity analysis into two conceptual frameworks, namely, \emph{utility-maximization} and \emph{electric}. The former assumes that people choose to spend their time in a way that maximizes utility within their space--time prism constraints, whereas the latter addresses the scheduling process explicitly. Although the problems discussed in these frameworks are indeed difficult to come to grips with, finding solutions for them are essential in enabling policy makers to manage traffic demand generated from household activities appropriately.
\citet{kitamura1996sequenced} proposed an activity-based utility model that replicates adaptive time-of-day dynamics, and introduced a simulator that offers dynamic and integrated forecasting of elements, such as transportation and land use.


The space--time expanded network is key to solving various problems that are formulated by the activity-based utility model. Such approaches were originally employed to solve dynamic traffic assignment (DTA) problems in many dynamic user equilibrium (DUE) models \citep{drissi1993variational, yang1998departure}. This method is well-suited to activity-based travel analysis, because it expressly considers the space--time prism constraints. Indeed, \citet{lam2001activity} formulated the combined activity/route choice problem as the ideal DUE and provided a method to solve this problem using a space--time extended network, as we have already stated in Section.~\ref{sec_intro}. Moreover, \citet{arentze2004multistate} introduced the multistate supernetwork, which can represent the multimodal transport system with sequential activities. This model was inspired by the supernetwork concept as introduced by \citet{sheffi1985urban}, which aims to enrich network representations in order to model traffic mode choices. In the multistate supernetwork, each node expresses a combination of an activity state and a vehicle state, and each edge expresses a transition between the states. Thus, the choice of sequential activities is expressed as a trajectory in the supernetwork and the complete trip chains that involve multiple transport modes can be obtained as a least-cost path. \citet{liu2015dynamic} later formalized dynamic activity travel assignment as a discrete-time DUE problem on a multistate supernetwork. 
\citet{oyama2017discounted} proposed a dynamic activity-based traffic assignment by considering users' sequential decision-making with a recursive logit model \citep{fosgerau2013link}.
However, these works use travel time as costs of edges, and consider that decisions are made at each node. Some works even consider congestion by changing the edge cost while keeping the space--time extended network as it is. This approach takes advantage of decreasing computational costs, but is not economically consistent, if we consider a sequential decision-making based on the model.
On the other hand, \citet{hara2017car} considers the optimal user and vehicle assignments for a car-sharing service, using space--time expanded network with normalized time, and considers the willingness to pay as the edge-rewards, in a manner similar to \citet{lam2001activity}. They provide algorithms that maximize the social welfare by introducing the temporal and spatial connection of users within the space--time expanded network. 
Thus, an activity-based analysis using the space--time expanded network is not only used to express multimodal and multistate behavior, but also to develop urban planning or traffic service policies that consider user heterogeneity.

\citet{recker2001bridge} discussed the relationship between trip-based and activity-based travel analysis and demonstrated that activity-based travel analysis can be formalized with mathematical programming based on traditional trip-based modeling methodologies with the addition of temporal and spatial constraints. Although we formulate problems that aim to maximize social utility under rational utility-maximizing customers, the discussion can be applied to problems that aim to minimize total travel time under rational cost-minimizing customers.




%
\section{\label{sec_overview}System Overview}									
%
%

\begin{figure}[t]
\begin{center}
	    \includegraphics[width = \hsize]{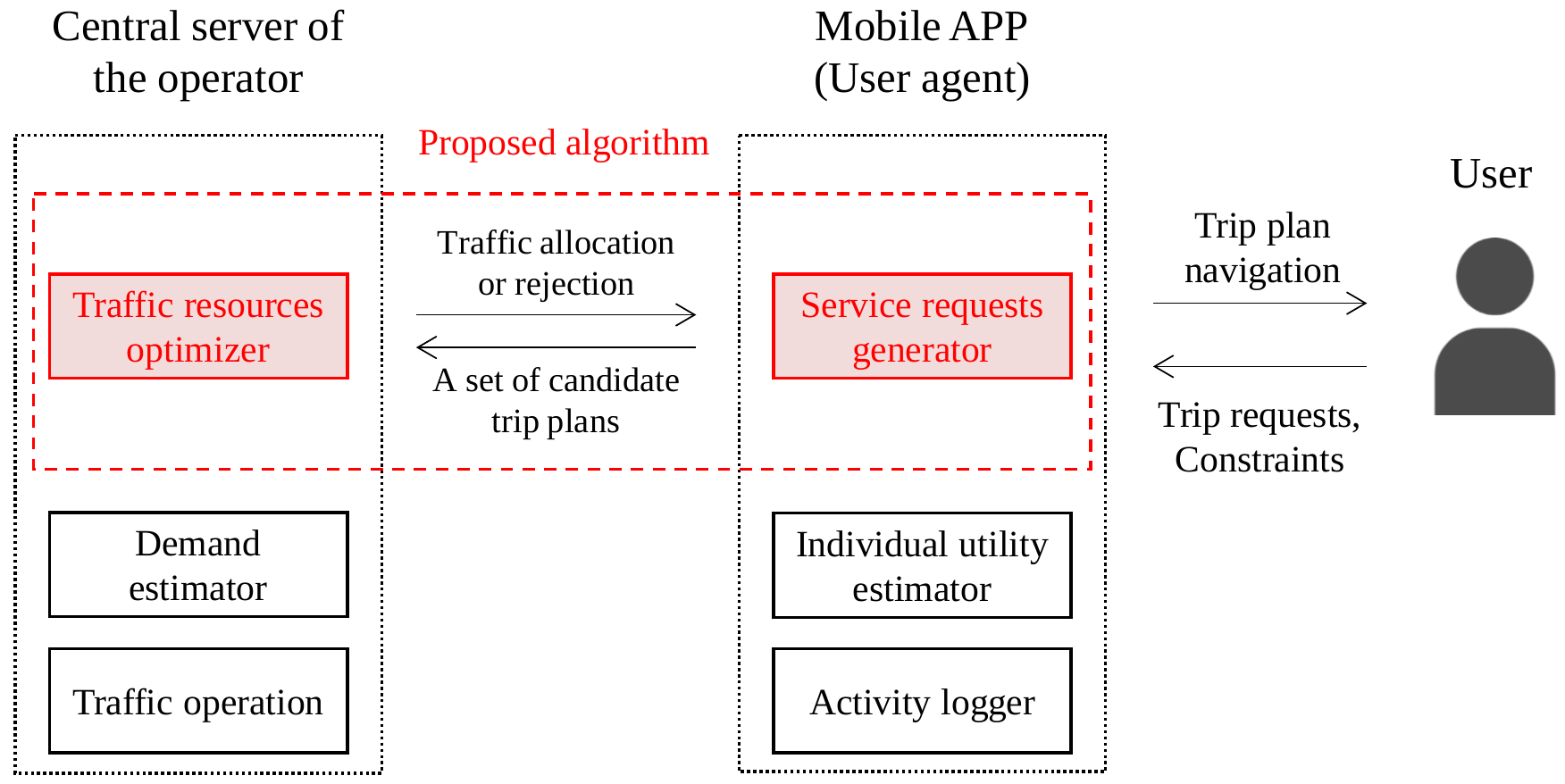}\\

          \caption{System overview} 
	    \label{fig_SystemOverview}
\end{center}
\end{figure}

We consider a MaaS system, which is implemented by a service operator. We show the system overview in Fig.~\ref{fig_SystemOverview}. The traffic allocation algorithm is implemented separately from the central server of the operator and app on the users' mobile phones. A service request generator is implemented on the mobile app. It generates a set of candidate trip-plans based on the users' direct requests as well as information generated from the activity loggers and individual utility estimator. This is then reported to the operator. In the following part of this paper, we call this service request generator as \emph{user agent}, or \emph{agent} for short.
Receiving requests from a set of agents, the traffic resources optimizer implemented in the operator makes a decision as to whether it allocates or rejects each request, and allocates traffic resources efficiently to the agents it accepts. The decision is notified to users through the mobile app and the traffic resources are operated as the decision, such as for example, taxis are dispatched. The operator may use the information of estimated future demand in its decision. In this paper, we focus on the traffic resources optimizer, and call it the \emph{operator}. 
Thus, we propose algorithms implemented for the operator and the agents that appropriately allocate traffic resources to users. 

We assume that decisions are made at discrete time steps $t \in T = \{1, 2, \ldots, \bar{T}\}$, where $T$ is the set of all time steps. 
The traffic network is expressed by a directed graph $\mathcal{G}=(\mathcal{N},\mathcal{E})$. It can represent multiple traffic modes by a concept of supernetwork \citep{sheffi1985urban,arentze2004multistate}. The service operator, having a limited capacity of traffic resources for each edge and each time in the network, allocates the resource for each user. Since the traffic capacity constraints are strictly kept in the allocation, we do not consider congestion and assume that the travel time of each edge remains constant. \footnote{This assumption is adopted in \citet{lam2001activity} as well, which discusses an activity-based traffic assignment model that aims to express time dependent UE states.} We use $\tau_e$ to denote the travel time of edge $e \in \mathcal{E}$. 
Some facilities on node $n \in \mathcal{N}$ may have an active time duration, such as for example, the opening time of a restaurant. We use $b(n) \subset T$ to denote the active time duration of node $n$.

In the following part, we introduce the model of user agent and the service operator, which is used in Section~\ref{sec_mechanisms}, where dynamic traffic allocation mechanisms are characterized.


%

\subsection{User Agent Model}

We use $I = \{1, 2, \ldots, \bar{I} \}$ to denote a set of user agents. A type of user $i \in I$ is expressed by origin $O_i \in \mathcal{N}$, destination $D_i \in \mathcal{N}$ and active time duration $T_i = \{ t^B_i, t^B_i+1, \ldots, t^E_i \} \subset T$, where $t^B_i$ denotes the earliest possible time for starting activities and $t^E_i$ denotes the strict deadline of finishing all the activities. Here, destination $D_i$ means the final destination after visiting several locations for various activities. Typically, both, origin $O_i$ and destination $D_i$ are the user's home. The service provides a series of trips from leaving his/her home after $t^B_i$ and until returning his/her home before $t^E_i$. 


\subsubsection{\label{sec_st_constraints}Space--time prism constraints}

\begin{figure}[t]
\begin{center}
	    \includegraphics[width = 0.4\hsize]{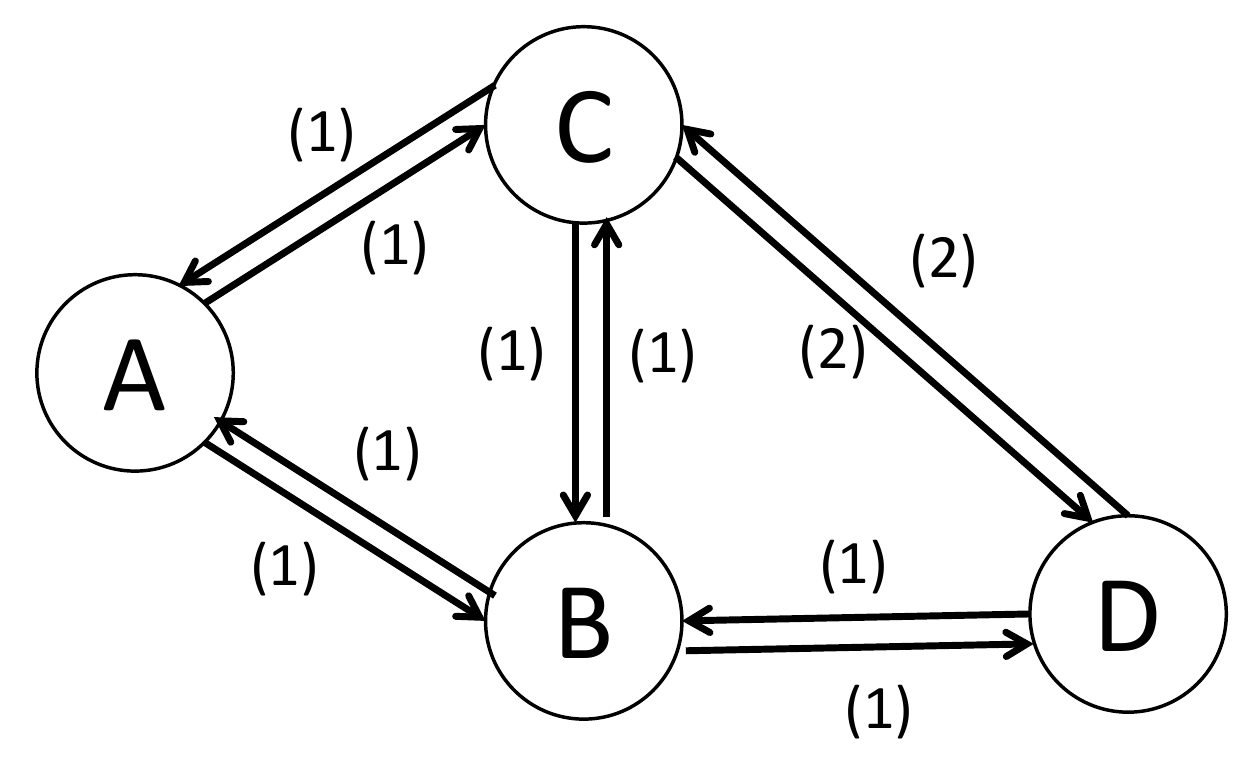}\\

          \caption{Sample network and the link costs} 
	    \label{fig_SampleNetwork}
\end{center}
\end{figure}

\begin{figure}[p]
	\begin{center}
	    \includegraphics[width = 0.8\hsize]{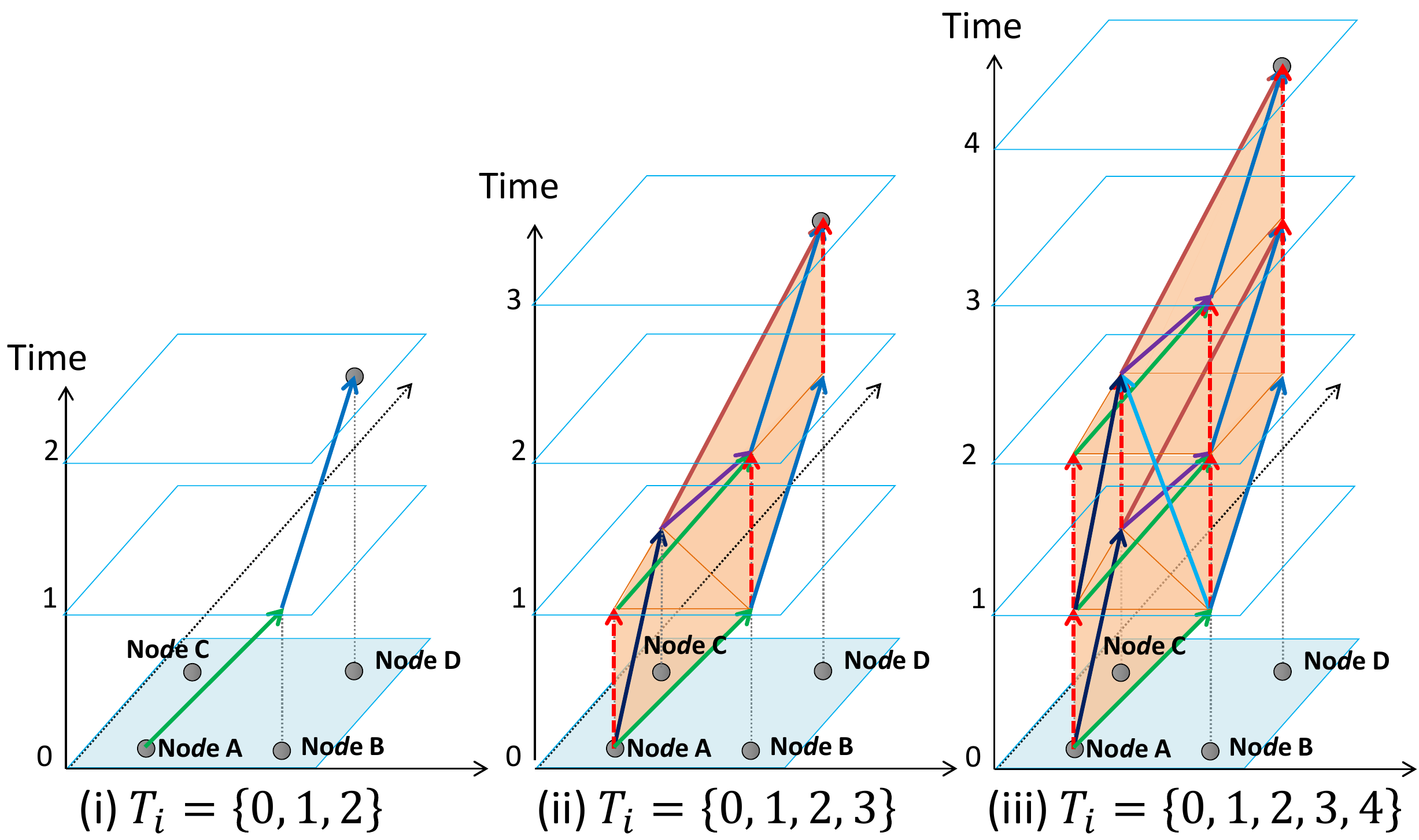}\\
		
		(a) Space--time constraints at origin and destination

	\vspace{1.0cm}

	    \includegraphics[width = 0.8\hsize]{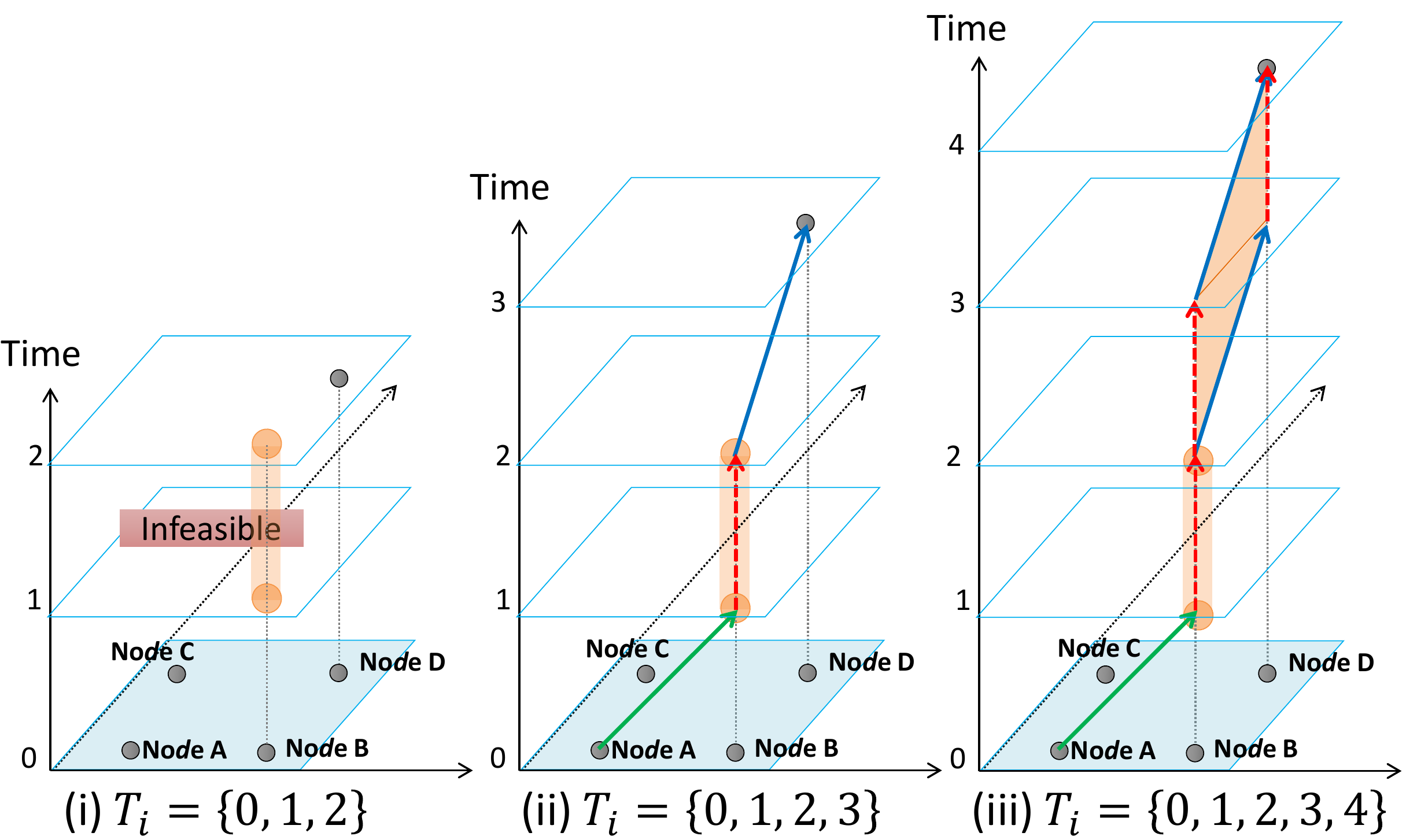}\\

		(b) Space--time constraints at origin, destination, and visiting places

	\vspace{1.0cm}

         \caption{A set of trip trajectories that keep the space--time prism constraints} 
         \label{fig_TimeSpacePrism_sample}
	\end{center}
\end{figure}

In generating trip-plans, including activities and transfers, agent $i$ has strict space--time prism constraints \citep{hagerstraand1970people} to guarantee that the user must be on $O_i$ at $t^B_i$ and on $D_i$ at $t^E_i$.
Here, we consider a sample network shown in Fig.~\ref{fig_SampleNetwork}. The network has 4 nodes $\mathcal{N} = \{A, B, C, D\}$ and 10 directed edges. The numbers in brackets show the required travel time on each edge. Specifically, two time steps are required to move between Nodes C and D, and one time step is required for the other edges.

Now, we consider an agent $i$ with origin $O_i = A$, destination $D_i = D$. Using space--time expanded network \citep{lam2001activity}, the candidate trip trajectories with given three different active time durations $T_i$ are shown as in Fig.~\ref{fig_TimeSpacePrism_sample}(a). In this figure, solid lines express the transfer along links and dashed lines express the staying at nodes.
In cases (i), (ii), and (iii), the final time step $t^E_i$ is set to 2, 3, and 4 and the number of candidate trajectories are 1, 5, and 15, respectively.
The space--time prism constraints are tightened if the agent has another constraint related to its activity. For instance, assume that the active time duration of a facility on node B is limited as $b(B) = \{ 1\}$, and the above agent $i$ has to visit Node B during its active time duration, the candidate trajectories are decreased as shown in Fig.~\ref{fig_TimeSpacePrism_sample}(a). Under this assumption, in case (i) with $t^E_i = 2$, the trip cannot be generated. In cases (ii) and (iii) with $t^E_i = 3$ and $t^E_i = 4$, the number of candidate trajectories are 1 and 2, respectively.
Many users have non-linear constraints on their series of activities, such as, for example, constraints stated by \emph{if-then-else} rule, and the candidate trip-plans are tightened by such constraints. 




\subsubsection{State transition}

We introduce $s_{i,t} \in \mathcal{S}_i$ to denote the state of agent $i \in I$ at time $t \in T$, where the state is a multi-dimensional index including for example, location $y$, traffic mode $m$, and so on. $\mathcal{S}_i$ denotes a set of states that agent $i$ can take. We introduce the location function $\lambda(\cdot)$, namely the location $y$ of an agent with state $s_{i,t}$ is obtained by $y = \lambda(s_{i,t})$. Note that $y$ can be the middle of an edge when $\tau_{e} \geq 2$.
The action that agent $i$ with state $s_{i,t}$ takes at time $t$ is denoted by $a_{i,t} \in \Gamma(s_{i,t}) \subset \mathcal{A}_i$, where $\Gamma(s_{i,t})$ is a set of actions that the agent can take at that time and $\mathcal{A}_i:\mathcal{S}_i \rightarrow \mathcal{S}_i$ denotes a set of actions that agent $i$ can take under all possible states. Taking action $a_{i,t} \in \Gamma(s_{i,t})$, the state of the agent transits from state $s_{i,t}$ to state $s_{i,t+1} \in \mathcal{S}_i$.
We use $\mathcal{S} = \bigcup_{i \in I}\mathcal{S}_i$ to denote a set of states for all agents and $\mathcal{A} = \bigcup_{i \in I}\mathcal{A}_i$ to denote a set of actions taken by all agents.

As actions of agents, we consider \emph{Moving} and \emph{Staying}. For instance, an agent located on Node C in the sample network shown in Fig.~\ref{fig_SampleNetwork} at time $t$ can take one of the following actions, namely, staying at Node C, or start moving toward Node A, B, or D. Formally, we use $\mathcal{A}_i^M \subset \mathcal{A}_i$ to denote a set of \emph{Moving} actions of agent $i$ while $\mathcal{A}_i^S \subset \mathcal{A}_i$ to denote a set of \emph{Staying} actions, where $\mathcal{A}_i = \mathcal{A}_i^M \cup \mathcal{A}_i^S$. A set of staying actions $\mathcal{A}_i^S$ expresses a stay of agent $i$ on any node $n \in \mathcal{N}$, potentially, doing some activity. We assume that the staying actions are taken only on any node $n \in \mathcal{N}$ in the traffic network $\mathcal{G}$ and are not taken anywhere else such as, for example, in the middle of a edge.
In contrast, moving action $a_{i,t} \in \mathcal{A}_i^M$ is taken along any edge $e \in \mathcal{E}$ in the traffic network $\mathcal{G}$. We introduce the edge function $\eta(\cdot)$. The edge $e \in \mathcal{E}$, related to action $a_{i,t} \in \mathcal{A}_i^M$ of an agent, is obtained by $e = \eta(a_{i,t})$. For any staying action $a_{i,t} \in \mathcal{A}_i^S$, $\eta(a_{i,t}) = \emptyset$.

Given that, we introduce a function $\delta: \mathcal{S} \times \mathcal{A} \times \mathcal{E} \rightarrow \{0,1\}$ to express the relationship between actions and edges, as follows:
\begin{equation}
\delta(s_{i,t},a_{i,t},e)=
\begin{cases}
1 & \text{if} \hspace{0.5cm} \lambda(s_{i,t}) \in \mathcal{N}, \hspace{0.2cm} \text{and} \hspace{0.2cm} \eta(a_{i,t}) = e \\
0 & \text{otherwise}.
\end{cases}
\end{equation}
Specifically, $\delta(s_{i,t},a_{i,t},e)=1$ if an agent $i$ that is located on any node in the traffic network at time $t$ take a moving action related to edge $e$ at that time, and otherwise $\delta(s_{i,t},a_{i,t},e)=0$.
Given that, the traffic volume $F_{e,t}$ that flows into edge $e$ at time $t$ is expressed as:
\begin{equation}
F_{e,t} = \sum_{i \in I} \delta(s_{i,t},a_{i,t},e).
\label{eq_link_flow}
\end{equation}
We assume that an agent starting the moving action at time $t \in T$ along a edge $e$ continues the action until it reaches the end of the edge at time $t + \tau_e$ and does not change the action in the middle of the edge.

\subsubsection{Trip plan}

As shown in Section~\ref{sec_st_constraints}, actions of each agent are bounded by non-linear space--time prism constraints. We use $L_i$ to denote a set of executable state-action trajectories that keep the constraints. A series of state-action transition $l_i$ during the active time duration $T_i = [ t^B_i, t^E_i ]$ is expressed as below:
\begin{equation}
	l_i = \{ s_{i, t^B_i}, a_{i, t^B_i}, \ldots, s_{i, t^E_i-1}, a_{i, t^E_i-1}, s_{i, t^E_i}\} \in L_i.
\end{equation}
By the constraints about origin and destination, $\lambda(s_{i, t^B_i}) = O_i$ and $\lambda(s_{i, t^E_i}) = D_i$ are satisfied.
The state-action trajectory $l_i$ provides a plan for the trip-chain from the origin to the destination including multiple activities and transfers. In the following parts of the paper, we call the trajectory a trip-plan.
Considering a time $t$ such that $t^B_i < t < t^E_i$, we use $l_i^{\langle t \rangle}$ to denote a trip-plan after time $t$, such that:
\begin{equation}
	l_i^{\langle t \rangle} = \{ s_{i, t}, a_{i, t}, \ldots, s_{i, t^E_i-1}, a_{i, t^E_i-1}, s_{i, t^E_i}\} \in L_i^{\langle t \rangle},
\end{equation}
where $L_i^{\langle t \rangle}$ denotes a set of executable trip-plans after time $t$ while keeping the space--time prism constraints. 




\subsubsection{Time dependent reward function}

The utility that the agents receive from activities are often dependent on the time of day. Thus, we introduce time dependent reward functions, in  a manner that is similar to \cite{axhausen1992activity,supernak1992temporal,lam2001activity}.
Formally, we define the agents' reward function by $R_i:\mathcal{S}_i \times \mathcal{A}_i  \times T \rightarrow \mathbb{R}$. Thus, the utility $U_i(l_i)$ that agent $i$ derives upon completing a trip-plan $l_i= \{ s_{i, t^B_i}, a_{i, t^B_i}, \ldots, s_{i, t^E_i-1}, a_{i, t^E_i-1}, s_{i, t^E_i}\}$ is expressed as:
\begin{equation}
U_i(l_i) = \sum_{t=t^B_i }^{t^E_i-1} R_i(s_{i, t}, a_{i, t},t).
\end{equation}
The time-discounted utility $DU_i(l_i^{\langle t \rangle})$ of agent $i$ that takes a trip-plan $l_i^{\langle t \rangle} = \{ s_{i, t}, a_{i, t}, \ldots, s_{i, t^E_i-1}, a_{i, t^E_i-1}, s_{i, t^E_i}\}$ at time $t$ is expressed as:
\begin{equation}
DU_i(l_i^{\langle t \rangle}) = 	\sum_{t'=t }^{t^E_i-1} \beta^{t'-t} R_i(s_{i, t'}, a_{i, t'},t'),
\end{equation}
where $\beta$ is time discount rate. We assume that $\beta$ is common for all users and the operator.
%

\subsubsection{Reports to the operator}

In the previous parts, we have defined non-linear space--time constraints and the reward function of agents. In this paper, we assume that this information is given exactly, but is available only after the beginning of the active time duration of agents. That is, information about the constraints and the reward function related to agent $i$ is available only after $t^B_i$. This assumption is reasonable because many users input their demands into mobile apps at the time of starting their trips. At time $t^B_i$, user agents generate a set of executable trip-plans $L_i$ and report the information $\theta_i = \{ R_i, L_i\}$, namely the reward function as well as a set of trip-plans, to the central server of the operator.

\subsubsection{Decision-making on participation}

Now, we consider the decision-making of users on participation with respect to mobility services. We consider users behaving based on minmax strategy to decide whether they use a service or not. Specifically, a user uses the service only when his/her space--time prism constraints are kept and the discounted utility $DU_i$ is larger than $h_i$ in the worst case scenario, where $h_i$ is the discounted utility that the user can obtain without using the service. In the following parts in this paper, we set $h_i=0$ for all agent $i \in I$, without loss of generality.

The agent makes this decision at its start time $t^B_i$. If an agent decides to use the service, the agent accepts the trip-plan that the operator assigns to the agent. In contrast, if an agent decides not to use the service, the agent does not use any traffic resource of the operator at any time $t \geq t^B_i$, and the operator is not concerned with the agent. We use $l_i^{C}=\{s_{i,t^B_i}, a_{i}^{cancel}\}$ to denote the trip-plan of agent $i$ that decides not to use the service at $t^B_i$, where $a_{i}^{cancel}$ denotes the action that the agent decides not to participate.

\subsubsection{\label{sec_flexible_behavior}Flexible behavior of users}

In the previous part, we assumed \emph{static agent-type} settings, where an agent reports its type $\theta_i = \{ R_i, L_i\}$ at the beginning of the trip and it remains unchanged until it finishes the trip. However, agents may change their minds, for example, they may want to return home earlier, or may want to visit another place. Our proposed mechanism can easily be extended to such situations. We call this the \emph {dynamic-agent type} settings. In these settings, an agent can report a new type $\theta'_i = \{ R'_i, L'_i\}$ to the operator in an arbitrary timing.

However, to show the essential property of our proposed mechanism in simple terms, in the following part of this paper, we consider the \emph{static agent-type} setting, except in Section~\ref{sec_mechanism_dynamic_agent} in which we shortly state the mechanism for the \emph {dynamic-agent type} setting, and in Section~\ref{sec_dynamic_type} and \ref{sec_results_dynamic_type} in which we numerically explore the effectiveness of our proposed mechanism in the \emph {dynamic-agent type} setting.

\subsection{Service Operator Model}

As stated in the previous part, the service operator sequentially receives the agents' report and is required to allocate traffic resources for each user adequately at each time. In this part, we focus on the operator and introduce the model of the traffic resources capacity and its allocation.

\subsubsection{Capacity of traffic resources}

\begin{figure}[t]
\begin{center}
	\begin{minipage}{0.33\hsize}
	\begin{center}
	    \includegraphics[height = 5.0cm]{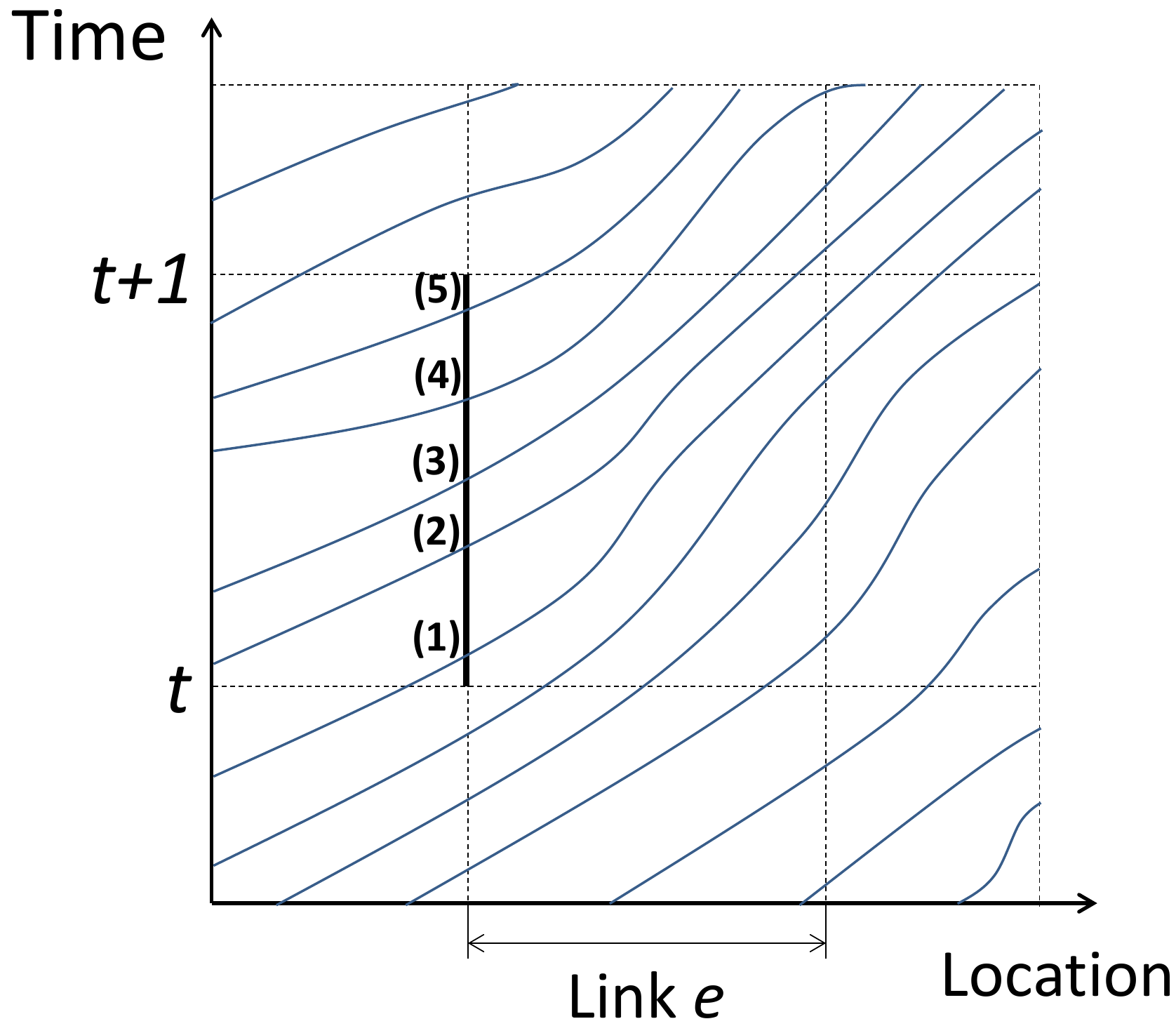}\\

	(a) Private vehicles
	\end{center}
	\end{minipage}
	\begin{minipage}{0.33\hsize}
	\begin{center}
	    \includegraphics[height = 5.0cm]{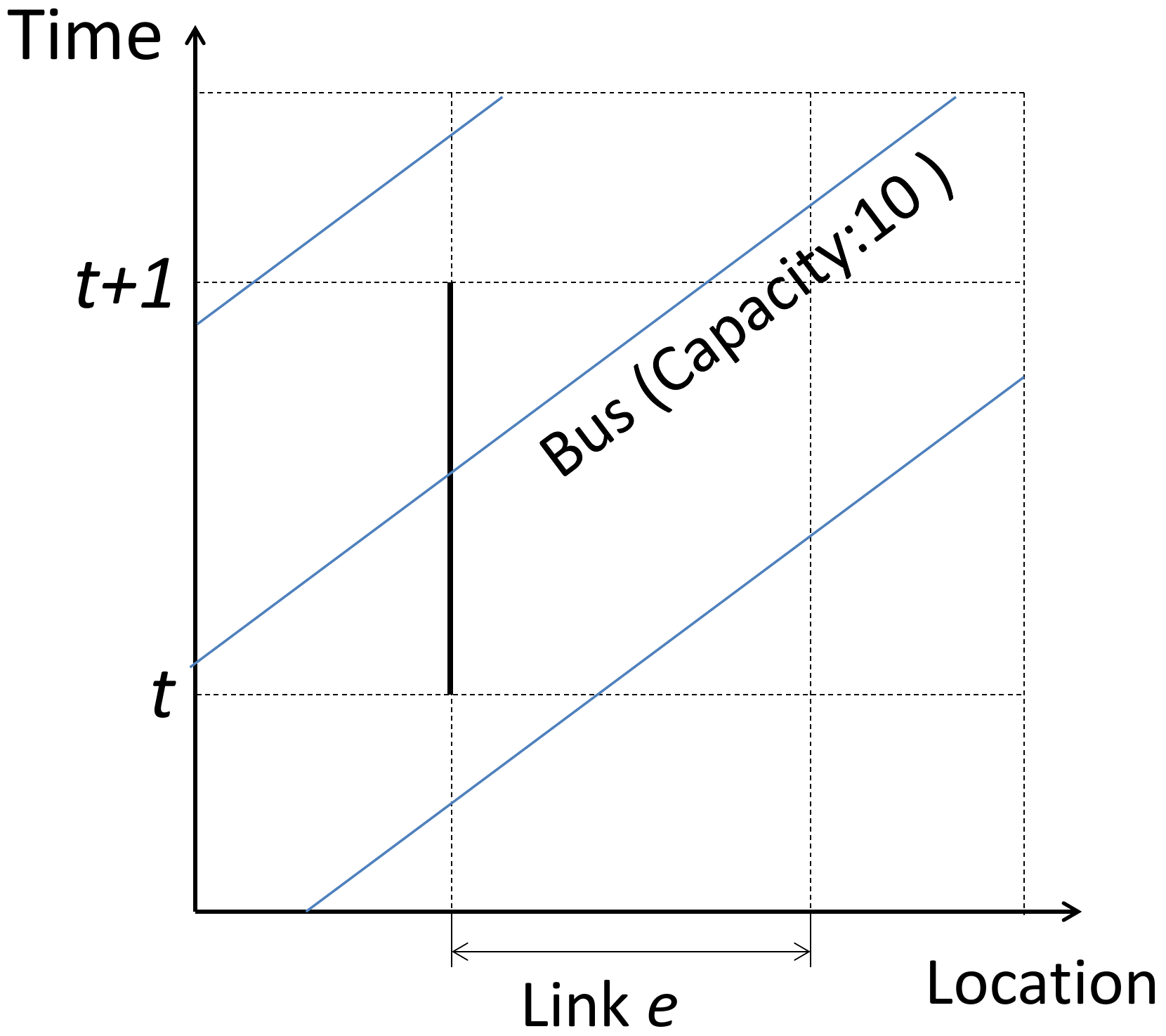}\\

	(b) Scheduled buses
	\end{center}
	\end{minipage}
	\begin{minipage}{0.33\hsize}
	\begin{center}
	    \includegraphics[height = 5.0cm]{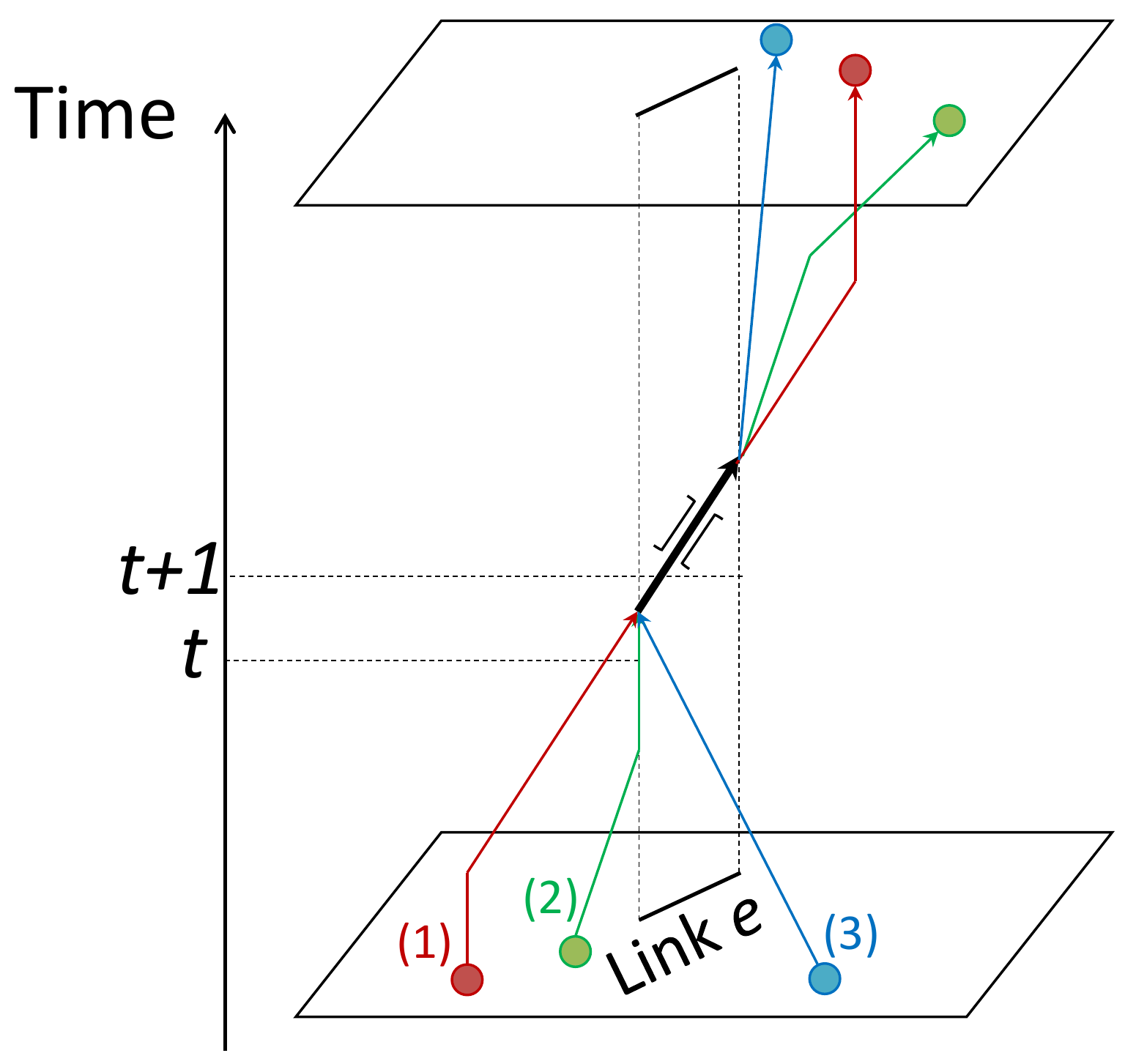}\\

	(C) Ride sharing
	\end{center}
	\end{minipage}

          \caption{Traffic resource capacity of various traffic mode} 
	    \label{fig_Capacity}
\end{center}
\end{figure}

In this part, we explain the \emph{capacity} of traffic resources. We use $C_{e,t}$ to denote the traffic capacity of edge $e \in \mathcal{E}$ at time $t \in T$, namely at most $C_{e,t}$ users can flow into edge $e$ at time $t$. Here, the capacity of traffic resources is a generalized concept of the road capacity for private vehicles, transportation capacity of scheduled bus services, acceptable volume of users of ride-share services, and so on. The traffic volume $F_{e,t}$ on edge $e$ at time $t$ is defined as $F_{e,t} = \sum_{i \in I} \delta(s_{i,t},a_{i,t},e)$ by Eq.~\ref{eq_link_flow}. For example, given the space-time trajectories of private vehicles as shown in Fig.\ref{fig_Capacity}(a), five vehicles flow into link $e$ at time $t$, and thus the traffic volume is obtained as $F_{e,t}=5$, assuming that each vehicle is occupied by only one person. In this case, the traffic capacity $C_{e,t}$ coincides with the link capacity, that is, the maximum number of vehicles that can flow into the link within a unit time, and is constant over time. In the case of scheduled buses, the space--time trajectories of which are shown in Fig.\ref{fig_Capacity}(b), the traffic volume is obtained as $F_{e,t} =10$, since one bus with the capacity of $10$ flows in to the link at time $t$. In this case, the traffic capacity $C_{e,t}$ is controlled by the service operator by changing the frequency and size of buses, and can be time-dependent. Finally, in the example of ride-share services, the space--time trajectories of which are shown in Fig.\ref{fig_Capacity}(c), the traffic volume is obtained as $F_{e,t}=3$ since three users share the vehicles that starts moving along the link $e$ at time $t$. In this case, the traffic capacity $C_{e,t}$ is also time-dependent and is determined by the combinatorial matching of trips by users. The method of bundling the trips of users are studied in many works \citep{santi2014quantifying,alonso2017demand,hara2017car}.

The assignment of the spatial road capacity for each traffic mode is also an important problem to be studied. For example, traffic operators can assign road capacity to private vehicles, scheduled buses or ride share services, by setting bus priority lanes or high-occupancy vehicle (HOV) lanes. With the widened reach of automated vehicles (AVs) as a result of which road capacity can possibly be increased, the assignment of priority lanes for AVs are also considered \citep{chen2016optimal}.
To make the discussion simple, in this paper, we assume that the traffic capacity of each traffic mode is given and represented by the capacity of edge $C_{e,t}$, note that an edge represents the transfer made by a traffic mode. Specifically, our proposed framework discusses the allocation of limited traffic resources to users with heterogeneous space--time prism constraints. However, this framework can be generalized to the settings where the capacity $C_{e,t}$ of each traffic mode is not given and only the constraints, such as road capacities, the number of vehicles, or total budgets, are given, which we discuss in Section~\ref{sec_discussions}.




\subsubsection{Allocation of traffic resources}

Given the agents' report $\theta_i = \{ R_i, L_i\}$ at each time, the operator dynamically makes decisions on whether it accepts or rejects the requests and also makes decisions on the allocation of traffic resources to the accepted agents.
This problem is formulated as a bi-level problem in which the upper problem describes the decisions of the operator aiming to maximize social welfare, while the lower problem describes the decisions of agents regardless of whether they use the service or not. We provide an optimization model to solve this bi-level problem simultaneously, by introducing constraints for the operator, to guarantee that any agent $i$ can finish its trip within its time-space prism constraints in any situation, once the request is accepted. Specifically, one of the reported set of trip-plans $L_i$ is allocated ex-post to agent $i$ if accepted.

We use $I_t^{reported} = \{1,2,\ldots, \bar{I}_t \} \subset I$ to denote the set of all agents that report to the operator by time $t \in T$, and use $\bvec{\theta}_t = \{\theta_1,\theta_2,\ldots, \theta_{\bar{I}_t} \}$ to denote all the reports that the operator has received from the agents until time $t$. In $\bvec{\theta}_t$, the information about the reward functions $\bvec{R}_t = \{R_1,R_2,\ldots, R_{\bar{I}_t} \}$ and the information about the sets of trip-plan $\bvec{L}_t = \{L_1,L_2,\ldots, L_{\bar{I}_t} \}$ are included.
The operator makes all decisions based on the information it has at that time, specifically, the information $\bvec{\theta}_t$ and the observed states $\bvec{s}_t = \{s_{1,t},s_{2,t},\ldots, s_{{\bar{I}_t},t} \}$. The actions of all agents determined by the operator are denoted by $\bvec{\pi}_t = \{\pi_{1,t},\pi_{2,t},\ldots, \pi_{{\bar{I}_t},t} \}$, where $\pi_{i,t} \in \Gamma(s_{i,t})$ denotes the action that the operator allocates to agent $i$ at time $t$. 
Specifically, $\pi_{i,t} = l_i^C$ if the request of an agent $i$ is rejected. We use $I_t$ to denote the set of agents using the service at time $t$, meaning all agents with $t_i^{B} \leq t \leq t_i^{E}-1$, except agents that are rejected before time $t$.
We assume that the operator knows the state $s_{i,t}$ of agent $i \in I_t$ at time $t$, which is natural under the assumption that the operator provides mobility services for all accepted users. 

\section{\label{sec_mechanisms}Mechanism}

Given the agent and the operator model stated in the previous section, we now show the traffic allocation mechanism of the service operator that guarantees to keep capacity constraints of traffic resources and space--time constraints of customers, even in the worst case, while aiming to maximize social welfare. We first show the offline optimal mechanism in a static setting and then discuss sequential mechanisms in a dynamic setting.

\subsection{Offline optimal mechanism}
First, we show the offline-optimal problem in which the operator knows all the information $\bvec{\theta} = \{ \bvec{R}, \bvec{L}\}$ of all agents beforehand, where $\bvec{\theta} = \{\theta_1,\theta_2,\ldots, \theta_{\bar{I}} \}$, $\bvec{R} = \{R_1,R_2,\ldots, R_{\bar{I}} \}$, and $\bvec{L} = \{L_1,L_2,\ldots, L_{\bar{I}} \}$. We define the social welfare as follows.

\begin{definition}[Social welfare] 
\label{def_SW} 
Social welfare is defined as the summation of rewards obtained by all agents over all time horizons. 
\end{definition}
Formally, social welfare $SW$ is given by:
\begin{equation}
SW(\bvec{\pi}|\bvec{s}_1)  = \sum_{i \in I} \sum_{t=t^B_i }^{t^E_i-1} R_i(s_{i, t}, \pi_{i, t},t) = \sum_{t =1}^{\bar{T}-1} R(\bvec{s}_t, \bvec{\pi}_t),
\label{eq_def_SW}
\end{equation}
where $\bvec{\pi} = \{\bvec{\pi}_1, \ldots, \bvec{\pi}_{\bar{T}}\}$ denotes the allocation to all agents at all times and $R(\bvec{s}_t, \bvec{\pi}_t) = \sum_{i \in I_t} R_i(s_{i,t}, \pi_{i,t},t) $ denotes the sum of the rewards of all agents given the allocation $\bvec{\pi}_t$ under the state $\bvec{s}_t$. Here, $\bvec{s}_1$ denotes the given initial state.

Then, we define the constraints in static settings. First, we define space--time prism constraints of users as follows.
\begin{definition}[Space--time prism constraints in static settings] 
\label{def_st_constraints_static} 
We say space--time prism constraints are satisfied in static settings, if the allocated trip plans for all accepted agents are included in a set of executable trip-plans for each agent. 
\end{definition}
Formally, it is given by;
\begin{equation}
\forall{i \in I}: (\pi_i|s_{i,t_i^{B}})= \{s_{i,t_i^{B}}, \pi_{i,t_i^{B}}, \ldots, s_{i,t_i^{E}-1}, \pi_{i,t_i^{E}-1}, s_{i,t_i^{E}} \} \in \{L_i \cup l_i^{C}\}
\label{eq_def_constraints}
\end{equation}
Specifically, the trip-plan allocated to agent $i$ is selected within a set of executable trip-plans $L_i$ or the request is rejected, otherwise.
We also define the capacity constraints as follows.
\begin{definition}[Capacity constraints in static settings] 
\label{def_cap_constraints_static} 
We say capacity constraints are satisfied in static settings, if the allocated trip plans for all agents do not violate the capacity constraints of traffic resources at any time and at any edge in the network. 
\end{definition}
Formally, it is given by;
\begin{equation}
\forall{e \in \mathcal{E}}, \forall{t \in T}: \sum_{i \in I} \delta(s_{i,t},\pi_{i,t},e) \leq C_{e,t},
\label{eq_def_cap_constraints}
\end{equation}

The left term of these constraints expresses the traffic volume that flows into the edge $e$ at time $t$, given by Eq.~\ref{eq_link_flow}.
Given these definitions, the offline optimal mechanism is defined as follows.

\begin{definition}[Offline optimal mechanism]
\label{def_offline_optimal} 
Offline optimal mechanism in our setting has an objective function that maximizes the social welfare $SW$ as defined by Definition~ \ref{def_SW} under space--time prism constraints defined by Definition~\ref{def_st_constraints_static} and capacity constraints defined by Definition~\ref{def_cap_constraints_static} . 
\end{definition}

Owing to the assumption that the operator has the perfect knowledge of future agents, the social welfare achieved by the offline optimal mechanism is the upper bound of all mechanisms in settings with limited information of future agents.


\subsection{Sequential mechanisms}

Then, we show the sequential mechanism in which the operator can use reports $\bvec{\theta}_t = \{\theta_1,\theta_2,\ldots, \theta_{\bar{I}_t} \}$ from a limited set of agents $I_t^{reported} = \{1,2, \ldots, \bar{I}_t \} \subset I$ that has already reported, to decide the allocation $\bvec{\pi}_t = \{\pi_{1,t},\pi_{2,t},\ldots, \pi_{{\bar{I}_t},t} \}$ at time $t$. 
First, we introduce the definition of discounted social welfare based on dynamic programming \citep{rust1994structural}, as follows.
\begin{definition}[Discounted social welfare] 
\label{def_DSW}
The discounted social welfare at time $t$ is defined by the summation of discounted rewards obtained by all agents at and after time $t$. 
\end{definition}
Formally, the discounted social welfare ${DSW}_t$ at time $t$ is given by:
\begin{equation}
{DSW}_t = {DSW}(\bvec{\bvec{\pi}_t},\bvec{\bvec{\pi}_{t+1}}, \ldots, \bvec{\bvec{\pi}_{\bar{T}}}|\bvec{s}_t) = \sum_{t' = t}^{\bar{T}} \beta^{t'-t} R(\bvec{s}_{t'}, \bvec{\pi}_{t'}).
\label{eq_def_DSW}
\end{equation}

Then, we define the constraints in dynamic settings. First, we define space--time prism constraints of users as follows.
\begin{definition}[Space--time prism constraints in dynamic settings] 
\label{def_st_constraints_dynamic}
We say that space--time prism constraints are satisfied under the decision $\bvec{\pi}_t$ at time $t$, if there exists at least one executable trip plan for all existing agents, except agents that are rejected at time $t$.
\end{definition}
Formally, it is given by:
\begin{equation}
\forall{i \in I_t}, \exists{l_i^{\langle t \rangle}} \in \{L_i^{\langle t \rangle} \cup l_i^{C}\}, \pi_{i,t} \in l_i.
\label{eq_dynamic_constraints}
\end{equation}
We also define the capacity constraints in dynamic settings as follows.
\begin{definition}[Capacity constraints in dynamic settings] 
\label{def_cap_constraints_dynamic}
We say that capacity constraints are satisfied at time $t$, if there are joint trip plans across all accepted agents, that do not violate the capacity of traffic resources at any time in the future.
\end{definition}
Formally, it is given by;
\begin{equation}
\exists{\bvec{l}_{I'_t}^{\langle t \rangle} \in \bvec{L}_{I'_t}^{\langle t \rangle}}, \forall{e \in \mathcal{E}}, \forall{t' \geq t}: \sum_{i \in {I'_t}} \sum_{(s_{i,t'},\pi_{i,t'}) \in l_i^{\langle t \rangle}}\delta(s_{i,t'},\pi_{i,t'},e) \leq C_{e,t'},
\label{eq_dynamic_cap_constraints}
\end{equation}
where $I'_t$ denotes all agents under the service at time $t$ except agents that are rejected at that time, $\bvec{l}_{I'_t}^{\langle t \rangle}$ denotes a combination of trip plans by $I'_t$, and $\bvec{L}_{I'_t}^{\langle t \rangle}$ denotes a set of executable combinations of trip plans by $I'_t$. 


Given these definitions, we define a class of mechanisms that always keep both, space--time prism constraints and capacity constraints in our dynamic settings, and call it \emph{Resources-Customers(RC)-feasible} mechanism.

\begin{definition}[RC-feasible mechanism] 
We say the mechanism is RC-feasible in our dynamic settings, if the mechanism satisfies the space--time prism constraints as defined by Definition~\ref{def_st_constraints_dynamic} and capacity constraints as defined by Definition~\ref{def_cap_constraints_dynamic} at any time. 
\end{definition}

It is the class of mechanisms that have arbitral objective functions and strictly keep the constraints given by Eqs.~\ref{eq_dynamic_constraints} and \ref{eq_dynamic_cap_constraints}. 
Moreover, we define \emph{Resources-Customers(RC)-optimal} mechanism as follows:

\begin{definition}[RC-optimal mechanism] 
\label{def_optimal_dynamic}
The RC-optimal mechanism in our dynamic setting has an objective function that maximizes the discounted social welfare $DSW_t$ as defined by Definition~\ref{def_DSW} and satisfies the space--time prism constraints as defined by Definition~\ref{def_st_constraints_dynamic} and capacity constraints as defined by Definition~\ref{def_cap_constraints_dynamic} . 
\end{definition}

To maximize the discounted social welfare $DSW_t$ given by Eq.~\ref{eq_def_DSW}, we introduce the V-value function $V(\bvec{s}_t)$ that represents the expected discounted social welfare under the state $\bvec{s}_t$ such that;
\begin{equation}
V(\bvec{s}_t) = \mathbb{E} \Biggl[ \sum_{t' = t}^{\bar{T}} \beta^{t'-t} R(\bvec{s}_{t'}, \bvec{\pi}_{t'}) \Biggr],
\end{equation}
where $\mathbb{E}[\cdot]$ denotes the mathematical operator that expresses the expectation.
We also introduce Q-value function $Q(\bvec{s}_t, \bvec{\pi}_t)$ defined as;
\begin{equation}
Q(\bvec{s}_t, \bvec{\pi}_t) = R(\bvec{s}_t, \bvec{\pi}_t) + \beta \cdot V(\mathcal{T}(\bvec{s_t}, \bvec{\pi}_t)),
\label{eq_def_Q_value}
\end{equation}
where $\mathcal{T}(\bvec{s_t}, \bvec{\pi}_t)$ denotes the state at time $t+1$ given the state $\bvec{s_t}$ and action $\bvec{\pi}_t$ at time $t$. This Q-value function $Q(\bvec{s}_t, \bvec{\pi}_t)$ represents the expected discounted social welfare ${DSW}_t$ given the allocation $\bvec{\pi}_t$ under the state $\bvec{s}_t$. 
By Bellman's principle of optimality \citep{bellman1957dynamic}, the objective function of the \emph{RC-optimal} mechanism as defined by Definition~\ref{def_optimal_dynamic} is reformulated in a recursive-form such that:
\begin{equation}
 \max_{\bvec{\pi}_t \in \Gamma(\bvec{s_t})} Q(\bvec{s}_t, \bvec{\pi}_t).
\label{eq_obj_DSW}
\end{equation}
The RC-optimal mechanism maximizes the Q-value function under the constraints given by Eqs.~\ref{eq_dynamic_constraints} and \ref{eq_dynamic_cap_constraints} at each time.
This mechanism is an instance of the RC-feasible mechanism. Another instance of the RC-feasible mechanism is the FCFS mechanism that allocates the traffic resource myopically to the early coming agents and rejects in case the request violates the capacity constraints, which is common in this kind of a setting.
In addition to these two extreme examples, there exists a wide range of RC-feasible algorithms that reasonably gives priority to later-coming high-value agents in its decision of allocation, unlike the FCFS mechanism that always gives priority to the early coming agents. 
We call a booking system operated by a RC-feasible algorithms as the \emph{Floating booking system}, in the sense that it does not totally fix the trip-plan of agents at the time of accepting the booking, but has some flexibility to accept any other high-valued agent, while guaranteeing to keep the constraints of all agents and traffic resources. We show algorithms for establishing such systems in Section~\ref{sec_algorithm}.

\subsection{\label{sec_mechanism_dynamic_agent}RC-feasible algorithm considering flexible behaviors of agents}

In this part, we briefly present an example of the RC-feasible algorithm in \emph{dynamic agent-type} settings stated in Section~\ref{sec_flexible_behavior}, assuming that the agent can report a new type $\theta'_i = \{ R'_i, L'_i\}$ to the operator in an arbitrary timing.
We consider an agent $i$ whose request $\theta_i = \{ R_i, L_i\}$ has been accepted and is on the trip at time $t$. At this point, an RC-feasible algorithm guarantees that this agent can reach its destination $D_i$ by time $t_i^E$, within a set of trip-plan $L_i^{\langle t \rangle}$ based on its space--time constraints. Assume that this agent changes its mind and reports a new type $\theta'_i = \{ R'_i, L'_i\}$ to the operator.
Receiving this report, the mechanism makes the following decision.
\begin{itemize}
\item(1) Case 1, where $L'_i \supseteq L_i^{\langle t \rangle}$, namely the space--time constraints are relaxed, the mechanism accepts the new type of agents immediately and makes a decision based on this new type $\theta'_i = \{ R'_i, L'_i\}$.
\item (2) Case 2, where $L'_i \nsupseteq L_i^{\langle t \rangle}$, namely a new set of trip-plans are not a superset of the original set of trip-plans, the mechanism makes a judgment on whether it accepts this request or not. Specifically, the mechanism temporarily processes the decision at that time based on new type $\theta'_i = \{R'_i, L'_i\}$ and reports by other agents. If the mechanism could find a solution that does not violate the capacity constraints, the operator accepts the new request and makes the following decisions based on the new type $\theta'_i = \{R'_i, L'_i\}$. Otherwise, the operator makes the following decisions based on the original type $\theta_i = \{ R_i, L_i\}$.
\end{itemize}

The algorithm shown above is RC-feasible in the \emph{dynamic agent-type} settings. Note that if the agent's new request is $L'_i=l_i^C$, that is the agent want to renounce the trip and not use any traffic resources in the future, the request is always accepted and thus, users can drop off the trip at any time.

%
\section{\label{sec_algorithm}Solution Algorithm}									
%
%

In the previous sections, we show a MaaS model and characterize the RC-feasible mechanism for establishing the floating booking system in the service. The mechanism solves two problems simultaneously, namely, deciding a set of accepted agents and allocating traffic resources. 
In this section, we propose solution algorithms to realize such a system that is implemented to the central server of the operator and the distributed servers connecting to users' mobile apps.
As shown in Eqs.~\ref{eq_dynamic_constraints} and \ref{eq_dynamic_cap_constraints}, this problem includes non-linear constraints. Thus, we propose using the ZDD \citep{minato1993zero} to solve this problem.


\begin{figure}[t]
\begin{center}
	\begin{minipage}{0.6\hsize}
	\begin{center}
	    \includegraphics[height = 5.0cm]{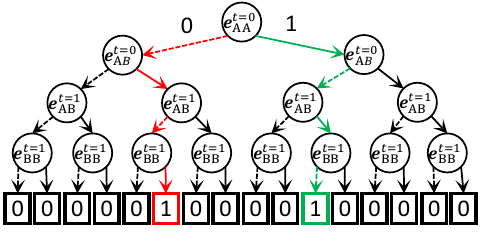}\\

	(a) Trip-plans expressed by the binary decision tree
	\end{center}
	\end{minipage}
	\begin{minipage}{0.38\hsize}
	\begin{center}
	    \includegraphics[height = 5.0cm]{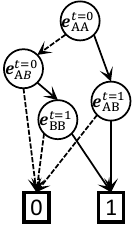}\\

	(b) Trip-plans expressed by the ZDD
	\end{center}
	\end{minipage}
          \caption{Trip-plan enumeration using the ZDD} 
	    \label{fig_ZDD_example}
\end{center}
\end{figure}

\begin{figure}[p]
\begin{center}

   \includegraphics[width = 0.8\hsize]{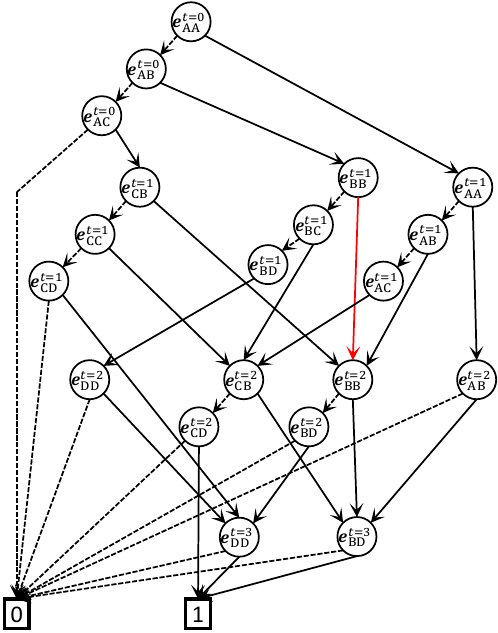}\\
   \caption{Enumerated trip-plans for the setting in Fig.~\ref{fig_TimeSpacePrism_sample}(a - iii), by using the ZDD} 
   \label{fig_ZDD_example2}

\end{center}
\end{figure}

We first show the algorithm of user agents that is rather simple. The user agents make a set of executable trip-plans $L_i$ under space--time prism constraints, based on the users' request, or the information of the active logger. To do this, agents execute a full search of the routes in the space--time expanded network, such as for example, as shown in Fig.~\ref{fig_TimeSpacePrism_sample}. 
This process is computed efficiently by using ZDD \citep{minato1993zero}. For example, when we consider the network with only 2 nodes, Node A and Node B, and a link AB with the link cost $\tau_{AB} = 1$, the trip-plans of an agent that start from Node A at $t=0$ and arrive at Node B at $t=2$ is enumerated as shown in Fig.~\ref{fig_ZDD_example}(a), using the binary decision tree. In the figure, for example, $e_{AB}^{t=0}$ express the action moving from Node A to Node B at $t=0$, and  $e_{AA}^{t=0}$ express the action staying at Node A  at $t=0$. The arrows express the binary-decisions, specifically, the solid arrows mean that the action is taken while the dashed arrows mean that the action is not taken. The numbers in $\{0,1\}$ shown in the bottom express whether the plan is executable or not. For example, a trip plan displayed in red in the figure shows a plan that an agent moves from Node A to Node B at $t = 0$ and stays at Node B at $t = 1$. In this case, two trip-plans shown in red and green in Fig.~\ref{fig_ZDD_example}(a) are executable. By using the ZDD, this binary-decision tree is represented in a compact form as shown in Fig.~\ref{fig_ZDD_example}(b). Various manipulations such as subset, union, and cross-product are efficiently computed while keeping the compact form.
As another example, in Fig.~\ref{fig_ZDD_example2}, we show the enumerated trip-plans in the setting as shown in Fig.~\ref{fig_TimeSpacePrism_sample}(a-iii). By using the ZDD, all of the 15 trip-plans considered in the settings where $T_i = \{ 0, 1, 2, 3, 4\}$ are compactly enumerated. If the agent has to stay at Node B at $t=1$, as considered in Fig.~\ref{fig_TimeSpacePrism_sample}(b-iii), the trip-plans have to include the arrow shown in red in Fig.~\ref{fig_ZDD_example2} and thus only 2 plans are extracted.
This process is computed on distributed small servers that communicate with users' mobile apps, and the obtained set of trip-plans are held in the compact form. The user agent sends this set of plans $L_i$ as well as the reward function $R_i$ to the central server.

In the following parts, we present the algorithm of the central server. We first show the algorithm of the RC-optimal mechanism in myopic and non-myopic settings. Then, we show the approximation algorithms to solve it in a computationally low cost manner, which are classified into the RC-feasible mechanism. After that, we discuss the trade-off among the wide range of algorithms that we provide. In this section, we assume \emph{static agent-type} settings, where an agent reports its type $\theta_i = \{ R_i, L_i\}$ at the beginning of the trip and it remains unchanged until it ends, but the discussion can easily be extended to \emph{dynamic agent-type} settings.

\subsection{Myopic algorithm}

First, we discuss the myopic algorithm that myopically makes decisions without using any information on future agents. The overall algorithm is shown in Algorithm~\ref{algo_main}.


\begin{algorithm}
\caption{Main (Myopic)}
\label{algo_main}
\begin{algorithmic}[1]
\REQUIRE $T, \mathcal{G}=(\mathcal{N},\mathcal{E}), I, \bvec{\theta}=\{\bvec{R}, \bvec{L} \}$
\ENSURE $\bvec{\pi}$
\STATE $t \leftarrow 0$
\STATE $\mathcal{Z} \leftarrow \emptyset $
\FOR{$t \in T$}
	\STATE $I_t, \bvec{\theta}_t \leftarrow \text{\sc{RenewAgent}}(I, \bvec{\theta}, t)$
	\STATE $\mathcal{Z} \leftarrow \text{\sc{MyopicUpdate}}(\mathcal{Z}, \bvec{L}_t)$
	\STATE $\bvec{\pi}_t \leftarrow \text{\sc{MyopicOptimalAllocation}}(\mathcal{Z}, \bvec{R}_t , \bvec{s}_t)$
	\STATE $\mathcal{Z} \leftarrow \text{\sc{NarrowDown}}(\mathcal{Z}, \bvec{\pi}_t)$
\ENDFOR
\end{algorithmic}
\end{algorithm}

The initial inputs to the algorithm are control time duration $T = \{1,2,\ldots, \bar{T}\}$ and spatial network $\mathcal{G}=(\mathcal{N},\mathcal{E})$. In addition, while receiving reports from agents at each time, the information on a set of agents $I$ and their type $\bvec{\theta}=\{\bvec{R}, \bvec{L}\}$ are dynamically updated (Line 4).
At each time, a set of executable trip-plans for all agents are updated based on $\bvec{L}_t$ and saved to $\mathcal{Z}$, using ZDD \citep{minato1993zero} (Line 5). We use $|\mathcal{Z}|$ to express the number of plans saved in $\mathcal{Z}$. 
Based on the reward functions $\bvec{R}_t$ and the state $\bvec{s}_t$, the algorithm determines an allocation $\bvec{\pi}_t$ within a set of executable  trip-plans $\mathcal{Z}$ (Line 6). Given the fixed allocation $\bvec{\pi}_t$ at each time, the set of executable plans $\mathcal{Z}$ is narrowed down (Line 7).
In the following parts, we highlight each part of this algorithm in detail.

\subsubsection{Function {\sc MyopicUpdate}}

\begin{algorithm}
\caption{MyopicUpdate}
\label{algo_update}
\begin{algorithmic}
\REQUIRE $\mathcal{Z}, \bvec{L}_t = \{L_1, L_2, \ldots L_{\bar{I}_t} \}$
\ENSURE $\mathcal{Z}$
\FOR{$i \in I_t$}
	\STATE $\mathcal{Z} \leftarrow \text{\sc{ClossProduct}}(\mathcal{Z}, L_i)$
	\STATE $\mathcal{Z} \leftarrow \mathcal{Z} \setminus \text{\sc{CapacityViolationSet}}(\mathcal{Z})$
\ENDFOR
\RETURN $\mathcal{Z}$
\end{algorithmic}
\end{algorithm}

We show the pseudo code of Function {\sc MyopicUpdate} in Algorithm~\ref{algo_update}. This function updates the set of plans $\mathcal{Z}$ based on the reported sets of trip-plans $\bvec{L}_t = \{L_1, L_2, \ldots L_{\bar{I}_t} \}$. Given a set of trip-plans $L_i$ of agent $i$, the algorithm first takes the cross-product of $L_i$ and $\mathcal{Z}$, which is a set of plans for agents that have already been processed. At this point, all trip-plans included in $\mathcal{Z}$ keeps the space--time prism constraints given by Eq.~\ref{eq_dynamic_constraints} for all agents.
However, some plans obtained by this process violate the capacity constraints given by Eq.~\ref{eq_dynamic_cap_constraints}, so the algorithm extracts such plans by the function {\sc CapacityViolationSet}$(\mathcal{Z})$ and removes these plans from $\mathcal{Z}$. This process is repeated for all agents. 
Thus, we can get the full-searched executable trip-plans that keep constraints, as given by Eqs.~\ref{eq_dynamic_constraints} and \ref{eq_dynamic_cap_constraints}.
The algorithm that is described above can be computed efficiently using ZDD. 

\subsubsection{Function {\sc MyopicOptimalAllocation}}

\begin{algorithm}
\caption{MyopicOptimalAllocation}
\label{algo_allocation}
\begin{algorithmic}
\REQUIRE $\mathcal{Z}, \bvec{R}_t , \bvec{s}_t$
\ENSURE $\bvec{\pi}_t$
\STATE $\Gamma(\bvec{s_t}) \leftarrow \text{\sc{JointAction}}(\bvec{s}_t)$
\STATE $\mathcal{Z}^w \leftarrow \text{\sc{Weight}}(\mathcal{Z}, \bvec{R}_t)$
\STATE $\bvec{\pi}_t \leftarrow \text{\sc{Optimize}}(\bvec{s}_t, \Gamma(\bvec{s_t}), \mathcal{Z}^w)$
\RETURN $\bvec{\pi}_t$
\end{algorithmic}
\end{algorithm}

We show the pseudo code of Function {\sc MyopicOptimalAllocation} in Algorithm~\ref{algo_allocation}. 
First, the algorithm enumerates a set of joint actions $\Gamma(\bvec{s_t})$ that could be taken by all agents based on state $\bvec{s}_t$ at the current time step $t$. 
Then, the algorithm weights the reward $\bvec{R}_t$ to the set of plans $\mathcal{Z}$. The obtained $\mathcal{Z}^w$ is a weighted graph saved in the ZDD structure. Using this, the algorithm determines the best allocation $\bvec{\pi}_t$ based on Eq.~\ref{eq_obj_DSW}. 
These processes are executed by search-based optimization using the ZDD.

\subsubsection{Function{\sc NarrowDown}}

\begin{algorithm}
\caption{NarrowDown}
\label{algo_narrowdown}
\begin{algorithmic}
\REQUIRE $\mathcal{Z}, \bvec{\pi}_t$
\ENSURE $\mathcal{Z}$
	\STATE $\mathcal{Z} \leftarrow \text{\sc{Including}}(\mathcal{Z}, \bvec{\pi}_t)$	
\RETURN $\mathcal{Z}$
\end{algorithmic}
\end{algorithm}

Finally, we show the pseudo code for the {\sc NarrowDown} function in Algorithm~\ref{algo_narrowdown}. As the joint action of all agents is determined as $\bvec{\pi}_t$ by the mechanism at each time, this function extracts the executable plans, including the determined joint action from $\mathcal{Z}$. This process can also be computed efficiently using the ZDD.

\subsection{Non-myopic algorithm}

Then, we discuss the non-myopic algorithm that makes decisions using information of estimated future demand. We assume that the stochastic and elastic demand $\tilde{\mathcal{D}}$ is given to the operator in advance although the actual demand is not revealed until the reports of user agents are received. For instance, in a car-sharing service for a small number of predescribed members, the space--time constraints of each member can be well estimated by observing the daily-activity of the member.


\begin{algorithm}
\caption{Main (Non-myopic)}
\label{algo_main_nonmyopic}
\begin{algorithmic}[1]
\REQUIRE $\tilde{\mathcal{D}}, T, \mathcal{G}=(\mathcal{N},\mathcal{E}), I, \bvec{\theta}=\{\bvec{R}, \bvec{L}\}$
\ENSURE $\bvec{\pi}$
\STATE $t \leftarrow 0 $
\STATE $\mathcal{Z}, \tilde{\bvec{L}} \leftarrow \text{\sc{Enumerate}}(\tilde{\mathcal{D}}, T, \mathcal{G} )$
\FOR{$t \in T$}
	\STATE $I_t, \bvec{\theta}_t \leftarrow \text{\sc{RenewAgent}}(I, \bvec{\theta}, t)$
	\STATE $\mathcal{Z} \leftarrow \text{\sc{NonMyopicUpdate}}(\mathcal{Z}, \tilde{\bvec{L}}_t, \bvec{L}_t)$
	\STATE $\bvec{\pi}_t \leftarrow \text{\sc{NonMyopicOptimalAllocation}}(\tilde{\mathcal{D}}, \mathcal{Z}, \bvec{R}_t , \bvec{s}_t)$
	\STATE $\mathcal{Z} \leftarrow \text{\sc{NarrowDown}}(\mathcal{Z}, \bvec{\pi}_t)$
\ENDFOR
\end{algorithmic}
\end{algorithm}

We show an overview of the algorithm in Algorithm~\ref{algo_main_nonmyopic}.
Basically, this algorithm is similar to the myopic algorithm shown in Algorithm~\ref{algo_main}. The differences are that the non-myopic algorithm enumerates an estimated set of plans based on the given stochastic demand $\tilde{\mathcal{D}}$ (Line 2), it uses the constraints $\tilde{\bvec{L}_t}$ derived from $\tilde{\mathcal{D}}$ when updating the set of plans (Line 5), and it uses $\tilde{\mathcal{D}}$ when making decisions on the allocation (Line 6). 
In the following parts, we show these processes.

%

\subsubsection{Function {\sc Enumerate}}

\begin{algorithm}
\caption{Enumerate}
\label{algo_enumerate}
\begin{algorithmic}
\REQUIRE $\tilde{\mathcal{D}}, T, \mathcal{G}=(\mathcal{N}, \mathcal{E})$
\ENSURE $\mathcal{Z}$
\STATE $\mathcal{Z} \leftarrow \emptyset$
\STATE $\tilde{\bvec{L}} \leftarrow \text{\sc{SampleConstraints}}(\tilde{\mathcal{D}}, T, \mathcal{G})$
\STATE $\mathcal{Z} \leftarrow \text{\sc{MyopicUpdate}}(\mathcal{Z}, \tilde{\bvec{L}})$
\RETURN $\mathcal{Z}, \tilde{\bvec{L}}$
\end{algorithmic}
\end{algorithm}

We show the pseudo code for the {\sc Enumerate} function in Algorithm~\ref{algo_enumerate}. This function aims to enumerate executable plans by possibly attending future agents in control time $T$. 
First, it generates space--time constraints of future agents using the stochastic demand model $\tilde{\mathcal{D}}$, and enumerates the sets of trip-plans $\tilde{\bvec{L}}$ for each agent that keeps the constraints. We call this set of plans $\tilde{\bvec{L}}$ as virtual plans.
 Then, it generates executable trip-plans $\mathcal{Z}$ of all agents by applying the process of function {\sc MyopicUpdate} that we have shown in Algorithm~\ref{algo_update}.


\subsubsection{Function {\sc NonMyopicUpdate}}

\begin{algorithm}
\caption{NonMyopicUpdate}
\label{algo_update_nonmyopic}
\begin{algorithmic}
\REQUIRE $\mathcal{Z}, \tilde{\bvec{L}}_t, \bvec{L}_t = \{L_1, L_2, \ldots L_{\bar{I}_t} \}$
\ENSURE $\mathcal{Z}$
\STATE $\bvec{L}' \leftarrow \bvec{L}_t \cap \tilde{\bvec{L}}_t$
\STATE $\mathcal{Z} \leftarrow  \text{\sc{Remove}}(\tilde{\bvec{L}}_t \setminus \bvec{L}')$
\STATE $\mathcal{Z} \leftarrow \text{\sc{MyopicUpdate}}(\mathcal{Z}, \bvec{L}_t \setminus \bvec{L}')$
\RETURN $\mathcal{Z}$
\end{algorithmic}
\end{algorithm}

We show the pseudo code of Function {\sc NonMyopicUpdate} in Algorithm~\ref{algo_update_nonmyopic}. 
This function updates the set of executable trip-plans $\mathcal{Z}$ by using the information of a set of trip-plans $\bvec{L}_t$ that are actually reported at time $t$. It first takes an intersection $\bvec{L}' = \bvec{L}_t \cap \tilde{\bvec{L}}_t$ of the actual plans and virtual plans, and then changes the unrealized virtual plans $\tilde{\bvec{L}}_t \setminus \bvec{L}'$ to the unexpected actual plans $\bvec{L}_t \setminus \bvec{L}'$ in the executable trip-plans $\mathcal{Z}$.

This function does not execute anything if the estimated virtual demands are precise, that is $\tilde{\bvec{L}}_t = \bvec{L}_t$, while it coincides with the Function {\sc MyopicUpdate} if the virtual demands are not provided, that is $\tilde{\bvec{L}}_t = \emptyset$.
Thus, it is reasonable to divide the computational efforts into two functions, {\sc Enumerate} that can be processed in advance and {\sc NonMyopicUpdate} that requires adaptive processes to the real-time requests. In cases in which the operator estimates the constraints of users precisely, the computational efforts in real-time processes can be decreased by using the Function {\sc Enumerate} effectively.




\subsubsection{Function{\sc NonMyopicOptimalAllocation}}

\begin{algorithm}
\caption{NonMyopicOptimalAllocation}
\label{algo_allocation_nonmyopic}
\begin{algorithmic}
\REQUIRE $\tilde{\mathcal{D}}, \mathcal{Z}, \bvec{R}_t , \bvec{s}_t$
\ENSURE $\bvec{\pi}_t$
\STATE $\Gamma(\bvec{s_t}) \leftarrow \text{\sc{JointAction}}(\bvec{s}_t)$
\STATE $\forall{\bvec{a} \in \Gamma(\bvec{s_t})}: Q_{\bvec{a}} \leftarrow 0$
\FOR{$m = 1$ to $M$}
	\STATE $\tilde{\bvec{R}} \leftarrow \text{\sc{SampleRewards}}(\tilde{\mathcal{D}})$
	\STATE $\mathcal{Z}^w \leftarrow \text{\sc{Weight}}(\mathcal{Z}, (\bvec{R}_t \cup \tilde{\bvec{R}}))$
	\FOR{$\bvec{a} \in \Gamma(\bvec{s_t})$}
		\STATE $Q_{\bvec{a}} \leftarrow Q_{\bvec{a}} + \text{\sc{GetQValue}}(\bvec{s}_t, \bvec{a}, \mathcal{Z}^w)$
	\ENDFOR
\ENDFOR
\STATE $\bvec{\pi}_t \leftarrow \argmax_{\bvec{a} \in \Gamma(\bvec{s_t})} Q_{\bvec{a}}$
\RETURN $\bvec{\pi}_t$
\end{algorithmic}
\end{algorithm}

We show the pseudo code of Function {\sc NonMyopicOptimalAllocation} in Algorithm~\ref{algo_allocation_nonmyopic}. 
Comparing it to the Function {\sc MyopicOptimalAllocation}, this function takes a multi-scenario approach \citep{chang2000line} to calculate the efficient allocation that maximizes the discounted social welfare based on given estimated demands $\tilde{\mathcal{D}}$.

It samples $M$ sets of reward functions $\tilde{\bvec{R}}$ for future agents based on the demand model $\tilde{\mathcal{D}}$. Then, it unifies those samples with already-reported reward functions $\bvec{R}_t$ of agents until the current time step and weights them to the set of plans $Z$. The obtained $Z^w$ is a weighted graph that is saved in the ZDD structure. Using this, the algorithm calculates the Q-value given by Eq.~\ref{eq_def_Q_value} for each joint action $\bvec{a} \in \Gamma(\bvec{s_t})$. After repeating this process for all $M$ sample scenarios, the algorithm selects the action that has the maximum Q-value on average as the efficient allocation at the current time step.

Since the function {\sc GetQValue} needs optimized calculation, the number of optimized calculations that this function needs is the number of sample scenario times the number of joint actions, including in $\Gamma(\bvec{s_t})$. However, the graphs structure of $Z$ is not changed throughout these processes, and only the weight is changed for each scenario. Thus, it can be calculated efficiently by using the ZDD.

\subsection{\label{sec_approximation_mechanism}Approximation algorithm}

Although the algorithm proposed in the previous part can be treated in a compact form by using ZDD, it requires large computation effort as the number of agents increases, because of the combinatorial nature of the problem. In this part, we show approximation algorithms to decrease the computational burden.

\subsubsection{Making decisions for each agent}

In the Function {\sc MyopicOptimalAllocation} (Algorithm~\ref{algo_allocation}) or the Function {\sc NonMyopicOptimalAllocation} (Algorithm~\ref{algo_allocation_nonmyopic}), the joint actions $\Gamma(\bvec{s_t}) $ of all agents existing at time $t$ are considered. However, the size of $\Gamma(\bvec{s_t}) $ suffers from the combinatorial explosion in the number of agents. To avoid this, we propose an approximate algorithm that makes decisions for each agent separately. The pseudo code of this approximate algorithm for myopic allocation is shown in Algorithm~\ref{algo_allocation_approximate}. The approximation algorithm for non-myopic allocation can also be considered similarly. We call this approximation algorithm as \emph{Per agent}. 

\begin{algorithm}
\caption{MyopicAllocation - PerAgent}
\label{algo_allocation_approximate}
\begin{algorithmic}
\REQUIRE $\mathcal{Z}, \bvec{R}_t , \bvec{s}_t$
\ENSURE $\bvec{\pi}_t$
\STATE $\mathcal{Z}^w \leftarrow \text{\sc{Weight}}(\mathcal{Z}, \bvec{R}_t)$
\FOR{$i \in I_t$}
	\STATE $\Gamma(s_{i,t}) \leftarrow \text{\sc{JointAction}}(s_{i,t})$
	\STATE $\pi_{i,t} \leftarrow \text{\sc{Optimize}}(s_{i,t}, \Gamma(s_{i,t}), \mathcal{Z}^w)$
	\STATE $\bvec{\pi}_t \leftarrow \text{\sc{Set}}(\pi_{i,t})$
	\STATE $\mathcal{Z}^w \leftarrow \text{\sc{NarrowDown}}(\mathcal{Z}^w, \pi_{i,t})$
\ENDFOR
\RETURN $\bvec{\pi}_t$
\end{algorithmic}
\end{algorithm}

Although this algorithm makes decisions for each agent's action separately, it collects the information of the reward function $\bvec{R}_t$ for all agents that reports at time $t$ beforehand and uses it to achieve efficient allocation.

\subsubsection{Branch cutting}

By introducing the limitation on the number of options, we can decrease the computational effort reasonably. We introduce the number of maximum branches $N_{branch}^{max}$ in our mechanism. Namely, the number of options that the enumerated plan $\mathcal{Z}$ always keeps;
\begin{equation}
	|\mathcal{Z}| \leq N_{branch}^{max}. 
\end{equation}
If the number of options exceeds $N_{branch}^{max}$, the only $N_{branch}^{max}$ options are left and the rest are cut off. There are various ways to select $N_{branch}^{max}$ options, for example, to select the highest $N_{branch}^{max}$ options in terms of rewards, to select $N_{branch}^{max}$ options randomly, or to use a combination of these two methods. We call this approximate algorithm as \emph{Branch cutting}.

\subsection{Property of proposed solution algorithms}

In this part, we show the property of algorithms stated in the previous parts.

\begin{prop} 
The exact algorithm shown in Algorithm~\ref{algo_main} is RC-optimal in myopic settings without any information related to future agents.
\label{thm_myopic_optimal}
\end{prop}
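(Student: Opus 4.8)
The plan is to show two things: (1) the algorithm is \emph{RC-feasible}, i.e., at every time step it maintains the invariants of Definitions~\ref{def_st_constraints_dynamic} and~\ref{def_cap_constraints_dynamic}; and (2) among all RC-feasible mechanisms that use no information about future agents, it selects, at each time step, an allocation $\bvec{\pi}_t$ that maximizes the myopic Q-value, which under the no-future-information assumption is exactly the myopic instantiation of Eq.~\ref{eq_obj_DSW}. The argument is naturally by induction on $t$, with the loop invariant being: ``after the call to {\sc NarrowDown} at time $t$, the set $\mathcal{Z}$ stored in the ZDD is precisely the set of joint trip-plan completions $\bvec{l}_{I'_t}^{\langle t \rangle}$ for all currently-accepted agents that (i) are consistent with each agent's reported $L_i$ restricted to times after $t$, (ii) are consistent with the already-committed actions $\bvec{\pi}_1,\ldots,\bvec{\pi}_t$, and (iii) satisfy the capacity constraints Eq.~\ref{eq_dynamic_cap_constraints} at all future times.''

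First I would establish the base case and then the inductive step by tracking $\mathcal{Z}$ through the three subroutines. For {\sc MyopicUpdate} (Algorithm~\ref{algo_update}): taking {\sc CrossProduct} of $\mathcal{Z}$ with each newly reported $L_i$ produces exactly the set of joint plans consistent with all reported agents' space--time prism constraints (this is where Definition~\ref{def_st_constraints_dynamic} is enforced — a nonempty $\mathcal{Z}$ is equivalent to Eq.~\ref{eq_dynamic_constraints} holding, after accounting for rejections, which correspond to the $l_i^C$ branch); subtracting {\sc CapacityViolationSet}$(\mathcal{Z})$ removes exactly those joint plans violating Eq.~\ref{eq_dynamic_cap_constraints}, so the surviving set is exactly the RC-feasible completions. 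For {\sc MyopicOptimalAllocation} (Algorithm~\ref{algo_allocation}): {\sc JointAction} enumerates $\Gamma(\bvec{s}_t)$, {\sc Weight} attaches the reported rewards $\bvec{R}_t$ to the ZDD so that a maximum-weight path corresponds to a maximum-$SW$ completion, and {\sc Optimize} returns $\bvec{\pi}_t \in \argmax$ over joint actions that are prefixes of some plan in $\mathcal{Z}$ — this is precisely $\max_{\bvec{\pi}_t \in \Gamma(\bvec{s}_t)} Q(\bvec{s}_t,\bvec{\pi}_t)$ when $V$ is taken over the myopic (no-future-agent) continuation, which by Definition~\ref{def_DSW} and Eq.~\ref{eq_obj_DSW} is the RC-optimal objective restricted to the information available. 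Crucially, because $\mathcal{Z}$ is nonempty by the inductive hypothesis, the $\argmax$ ranges over a nonempty feasible set, so the chosen $\bvec{\pi}_t$ extends to at least one RC-feasible completion — this is what guarantees feasibility is preserved. For {\sc NarrowDown} (Algorithm~\ref{algo_narrowdown}): {\sc Including}$(\mathcal{Z},\bvec{\pi}_t)$ keeps exactly the completions beginning with the committed $\bvec{\pi}_t$, which is nonempty by the previous sentence, re-establishing the invariant for $t+1$.

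I would then note that maintaining the invariant for all $t \le \bar T$ gives RC-feasibility directly (the invariant implies both Definitions~\ref{def_st_constraints_dynamic} and~\ref{def_cap_constraints_dynamic} at each step), and that the per-step maximization of $Q$ over the exact feasible set is, by definition, the RC-optimal mechanism specialized to the myopic case — one should be explicit that ``myopic RC-optimal'' means $V(\cdot)$ in Eq.~\ref{eq_def_Q_value} is replaced by the value ignoring agents not yet reported (equivalently $\beta=0$ is \emph{not} assumed; rather, future agents contribute $0$ because they are unknown, but committed agents' future rewards are still counted through $\mathcal{Z}$). The main obstacle I anticipate is the correctness of the ZDD set operations themselves — arguing that {\sc CrossProduct}, {\sc CapacityViolationSet}, {\sc Weight}, {\sc Optimize}, and {\sc Including} actually realize the set-theoretic / optimization operations claimed — but this can be cited to the standard ZDD literature (\citealp{minato1993zero,minato2001zero}); modulo that, the subtle point is ensuring that rejections are handled consistently so that ``$\mathcal{Z}$ nonempty'' is truly equivalent to Eq.~\ref{eq_dynamic_constraints} (an agent whose only feasible completion is $l_i^C$ must be recorded as rejected and then excluded from future capacity bookkeeping), and that the $\Gamma(\bvec{s}_t)$ enumeration is intersected with the prefixes appearing in $\mathcal{Z}$ so that no infeasible joint action can be returned.
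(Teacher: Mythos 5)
Your proposal is correct and follows essentially the same route as the paper's (much briefer) proof: {\sc MyopicUpdate} maintains the complete set of executable joint trip-plans satisfying Eqs.~\ref{eq_dynamic_constraints} and \ref{eq_dynamic_cap_constraints}, and {\sc MyopicOptimalAllocation} performs a full search over that set to maximize the objective of Eq.~\ref{eq_obj_DSW}, which together give RC-optimality in the myopic setting. Your inductive loop-invariant formulation and the explicit remarks on rejections ($l_i^C$), nonemptiness of $\mathcal{Z}$, and ZDD-operation correctness are a more careful elaboration of the same argument rather than a different approach.
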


\begin{proof}
The function {\sc MyopicUpdate} shown in Algorithm~\ref{algo_update} provides a complete set of executable trip-plans for all agents that have reported, which satisfies the space--time constraints as defined by Eq.~\ref{eq_dynamic_constraints} and capacity constraints as defined by Eq.~\ref{eq_dynamic_cap_constraints}. Then, the decisions are made by the function MyopicOptimalAllocation as shown in Algorithm~\ref{algo_allocation} that optimize the discounted social welfare based on Eq.~\ref{eq_obj_DSW} by a full-search using ZDD. Thus, the solution of the Algorithm~\ref{algo_main} is RC-optimal, in myopic settings.
\end{proof}

\begin{prop} 
Assuming that the set of estimated trip-plans $\tilde{\bvec{L}}$ obtained by the function {\sc Enumerate} is precise, namely $\tilde{\bvec{L}} = \bvec{L}$, and the number of sample scenarios $M$ is set as infinitely large, the exact algorithm shown in Algorithm~\ref{algo_main_nonmyopic} is RC-optimal in non-myopic settings under the given stochastic demand $\tilde{\mathcal{D}}$.
\label{thm_myopic_optimal}
\end{prop}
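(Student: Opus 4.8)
The plan is to check the two requirements in Definition~\ref{def_optimal_dynamic} in turn: that Algorithm~\ref{algo_main_nonmyopic} is RC-feasible, and that at each time step it selects the allocation realising the objective Eq.~\ref{eq_obj_DSW}, i.e.\ the maximiser of the Q-value Eq.~\ref{eq_def_Q_value}.

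RC-feasibility I would obtain by an induction on $t$ showing that the working set $\mathcal{Z}$ always equals the complete collection of joint trip-plans that are consistent with the actions committed so far and that satisfy both Eq.~\ref{eq_dynamic_constraints} and Eq.~\ref{eq_dynamic_cap_constraints}. For the base case, under $\tilde{\bvec{L}}=\bvec{L}$ the call to \textsc{SampleConstraints} returns the true trip-plan sets, so \textsc{MyopicUpdate} inside \textsc{Enumerate} (cross-products followed by deletion of the capacity-violating subset) produces exactly this complete set, by the same argument used for the myopic algorithm; note that here $\mathcal{Z}$ already accounts for \emph{every} agent, since the virtual agents coincide with the real ones. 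For the inductive step, since $\tilde{\bvec{L}}_t=\bvec{L}_t$ the intersection $\bvec{L}'$ in \textsc{NonMyopicUpdate} (Algorithm~\ref{algo_update_nonmyopic}) is all of $\bvec{L}_t$, so \textsc{Remove} removes nothing and \textsc{MyopicUpdate} is invoked on the empty collection: $\mathcal{Z}$ is unchanged and still complete. Then \textsc{NonMyopicOptimalAllocation} returns some $\bvec{\pi}_t\in\Gamma(\bvec{s}_t)$ that extends to a plan of $\mathcal{Z}$ --- such a $\bvec{\pi}_t$ exists because rejecting every newly reported agent (via $l_i^{C}$) while continuing the already-accepted agents along some executable completion is itself represented in $\mathcal{Z}$, so $\mathcal{Z}\neq\emptyset$ throughout --- and \textsc{NarrowDown} only discards the plans incompatible with $\bvec{\pi}_t$, so completeness is preserved for step $t+1$. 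Since the committed $\bvec{\pi}_t$ always lies on a plan meeting both constraints, Eq.~\ref{eq_dynamic_constraints} and Eq.~\ref{eq_dynamic_cap_constraints} hold at every $t$, hence the algorithm is RC-feasible.

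For optimality, the key is that, for a fixed realisation of the reward functions, \textsc{GetQValue}$(\bvec{s}_t,\bvec{a},\mathcal{Z}^{w})$ evaluated on the \emph{complete} set $\mathcal{Z}^{w}$ returns $R(\bvec{s}_t,\bvec{a})+\beta$ times the best discounted continuation value attainable by a plan of $\mathcal{Z}^{w}$ consistent with $\bvec{a}$; because $\mathcal{Z}$ remains complete and the same full search is repeated at later steps, that continuation value is in fact attained along the realised trajectory (in the spirit of the offline optimal mechanism, restricted to the continuation). Hence for the $m$-th sampled scenario the quantity added to $Q_{\bvec{a}}$ in Algorithm~\ref{algo_allocation_nonmyopic} equals the ex-post RC-optimal continuation value for that scenario, so $\tfrac1M Q_{\bvec{a}}$ is a Monte-Carlo estimator of the expectation of this quantity over rewards drawn from $\tilde{\mathcal{D}}$ conditioned on the information available at time $t$. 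I would then let $M\to\infty$: by the strong law of large numbers (rewards bounded and $\Gamma(\bvec{s}_t)$ finite) $\tfrac1M Q_{\bvec{a}}\to Q(\bvec{s}_t,\bvec{a})$ for each of the finitely many $\bvec{a}$, so $\bvec{\pi}_t=\argmax_{\bvec{a}\in\Gamma(\bvec{s}_t)}Q_{\bvec{a}}$ converges to a maximiser of Eq.~\ref{eq_obj_DSW}. Together with RC-feasibility this gives that Algorithm~\ref{algo_main_nonmyopic} is RC-optimal under $\tilde{\mathcal{D}}$.

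I expect the delicate step to be the identification of the $M\to\infty$ limit of the scenario average with the Q-value $Q(\bvec{s}_t,\bvec{a})$ of Eq.~\ref{eq_def_Q_value}: this requires reading the expectation defining the V-value as the expected value of the ex-post RC-optimal (offline) allocation under $\tilde{\mathcal{D}}$, equivalently that optimally serving the accepted agents never benefits from anticipating not-yet-revealed reward realisations. Pinning this down cleanly --- together with checking that \textsc{NonMyopicUpdate} is genuinely inert when $\tilde{\bvec{L}}=\bvec{L}$ and that \textsc{NarrowDown} never deletes a plan that could still belong to an optimal completion --- is the crux; the RC-feasibility induction and the law-of-large-numbers limit are otherwise routine.
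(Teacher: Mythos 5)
Your proposal follows essentially the same route as the paper's proof: {\sc NonMyopicUpdate} is inert when $\tilde{\bvec{L}}=\bvec{L}$, feasibility holds because every committed action is drawn from the set $\mathcal{Z}$ of plans satisfying Eqs.~\ref{eq_dynamic_constraints} and \ref{eq_dynamic_cap_constraints}, and optimality follows because the infinite-sample scenario average maximizes the Q-value of Eq.~\ref{eq_def_Q_value}. The only differences are matters of detail — you spell out the completeness induction and the law-of-large-numbers limit, and the ``crux'' you flag (identifying the per-scenario hindsight average with $Q(\bvec{s}_t,\bvec{a})$) is precisely the step the paper's proof asserts in one line, implicitly reading the expectation defining the V-value as this scenario-based evaluation under $\tilde{\mathcal{D}}$.
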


\begin{proof}
If the set of estimated trip-plans $\tilde{\bvec{L}}$ obtained by the function {\sc Enumerate} is precise, the function {\sc NonMyopicUpdate} does nothing. The function {\sc NonMyopicAllocation} with infinitely large number of samples maximizes the expected discounted social welfare as defined by Eq.~\ref{eq_def_DSW} precisely, by maximizing Q-value given by Eq.~\ref{eq_def_Q_value} using ZDD. The solution is obtained within the set of executable trip plans $\mathcal{Z}$ that  satisfies the space--time constraints as defined by Eq.~\ref{eq_dynamic_constraints} and capacity constraints as defined by Eq.~\ref{eq_dynamic_cap_constraints}. Thus, this mechanism is RC-optimal under the given stochastic demand $\tilde{\mathcal{D}}$.
\end{proof}

\begin{prop} 
All algorithms introduced in this section, specifically, myopic algorithm shown in Algorithm~\ref{algo_main} and non-myopic algorithm shown in Algorithm~\ref{algo_main_nonmyopic}, arbitrarily combined with Per agent and Branch cutting approximation is RC-feasible. The number of samples $M$ and the size of initially enumerated samples in non-myopic algorithm can also be set arbitrarily.
\label{thm_RC_feasible}
\end{prop}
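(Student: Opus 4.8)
The plan is to prove RC-feasibility through a single loop invariant maintained by the working set $\mathcal{Z}$, and then to observe that every approximation introduced in this section only affects \emph{which} member of $\mathcal{Z}$ (or of an already-feasible residual of $\mathcal{Z}$) is selected, never whether a feasible member exists. Concretely, I would show by induction on the time step $t$ that, just before the allocation step (Line~6 of Algorithm~\ref{algo_main}, resp.\ Algorithm~\ref{algo_main_nonmyopic}), $\mathcal{Z}$ is a \emph{non-empty} set of joint trip-plans, each of which (i) assigns to every agent $i \in I_t$ either a trip-plan in $L_i^{\langle t \rangle}$ or the cancellation plan $l_i^{C}$, consistent with the allocations $\bvec{\pi}_1,\dots,\bvec{\pi}_{t-1}$ already committed, and (ii) respects $\sum_{i} \delta(s_{i,t'},\pi_{i,t'},e) \le C_{e,t'}$ for all $e \in \mathcal{E}$ and all $t' \ge t$. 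Once this invariant is in hand, Definition~\ref{def_st_constraints_dynamic} holds because the allocation extracted from any plan in $\mathcal{Z}$ leaves every non-rejected agent with a completing trip-plan (the remainder of that plan), and Definition~\ref{def_cap_constraints_dynamic} holds because that same joint plan witnesses the required future-feasible combination; hence the mechanism is RC-feasible.

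The base case is trivial: before any agent reports, $\mathcal{Z}$ is the empty product $\{\emptyset\}$, which is non-empty and vacuously satisfies (i)--(ii); for the non-myopic algorithm, {\sc Enumerate} (Algorithm~\ref{algo_enumerate}) only calls {\sc MyopicUpdate}, whose invariant-preservation is covered below (applied to the virtual agents). For the inductive step I would trace the four sub-steps. {\sc RenewAgent} merely enlarges $I_t$. {\sc MyopicUpdate} (Algorithm~\ref{algo_update}) forms the cross-product of $\mathcal{Z}$ with the reported plan set of each new agent and then deletes the capacity-violating joint plans via {\sc CapacityViolationSet}; the crucial point is that rejection is always available, so the joint plans that extend every new agent by $l_i^{C}$ survive (cancelled agents contribute nothing to any $F_{e,t'}$), keeping $\mathcal{Z}$ non-empty, while (i)--(ii) hold by construction of the cross-product and the deletion. {\sc MyopicOptimalAllocation} (Algorithm~\ref{algo_allocation}) then picks $\bvec{\pi}_t$ as the time-$t$ slice of some plan in the non-empty $\mathcal{Z}$ --- so such a $\bvec{\pi}_t$ exists and both dynamic-setting definitions hold for it. Finally {\sc NarrowDown} (Algorithm~\ref{algo_narrowdown}) retains exactly the plans containing $\bvec{\pi}_t$; this family is non-empty because $\bvec{\pi}_t$ came from one of them, re-establishing the invariant at $t+1$.

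It then remains to verify the approximations. \emph{Branch cutting} replaces $\mathcal{Z}$ by a subset of size $\min(|\mathcal{Z}|, N_{branch}^{max})$, which is non-empty as long as $N_{branch}^{max}\ge 1$ and all of whose plans still satisfy (i)--(ii); the rule choosing which branches to keep is irrelevant. \emph{Per agent} (Algorithm~\ref{algo_allocation_approximate}) fixes $\pi_{i,t}$ one agent at a time, each time selecting $\pi_{i,t}$ from the current residual $\mathcal{Z}^w$ and then narrowing $\mathcal{Z}^w$ to the plans consistent with that choice; since the residual starts non-empty and each narrowing is by a choice taken from it, the residual stays non-empty and the assembled $\bvec{\pi}_t$ is the time-$t$ slice of a single surviving joint plan --- hence jointly feasible. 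In the non-myopic algorithm, the scenario count $M$ and the virtual-sample size enter only through the \emph{scoring} of joint actions in {\sc NonMyopicOptimalAllocation} and through which virtual plans seed $\mathcal{Z}$ in {\sc Enumerate}; they never loosen (i)--(ii).

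The main obstacle I anticipate is {\sc NonMyopicUpdate} (Algorithm~\ref{algo_update_nonmyopic}) under inaccurate virtual demand: I must argue that deleting the unrealized virtual plans with {\sc Remove}$(\tilde{\bvec{L}}_t \setminus \bvec{L}')$ and then folding in the unexpected actual plans $\bvec{L}_t \setminus \bvec{L}'$ via {\sc MyopicUpdate} leaves $\mathcal{Z}$ non-empty and still satisfying (i)--(ii). The subtle part is that {\sc Remove} must drop the coordinates of the virtual agents \emph{entirely} (they never materialize) rather than merely intersect, so that no genuine future commitment is destroyed; I would record this as a short lemma. Granting it, the subsequent {\sc MyopicUpdate} re-introduces each actual agent together with its $l_i^{C}$ option, and the cross-product-and-capacity-filter argument from the inductive step applies verbatim to restore the invariant --- after which RC-feasibility follows as in the myopic case.
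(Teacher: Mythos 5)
Your proposal is correct and rests on the same core idea as the paper's own (much terser) proof: every decision, in the exact and in all approximate variants, is drawn from the maintained set $\mathcal{Z}$ of joint trip-plans that already satisfies the constraints of Definitions~\ref{def_st_constraints_dynamic} and \ref{def_cap_constraints_dynamic}, so RC-feasibility is automatic. Your explicit loop invariant, the non-emptiness argument via the cancellation plan $l_i^{C}$, and the careful treatment of {\sc NonMyopicUpdate} simply spell out details the paper leaves implicit.
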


\begin{proof}
In all algorithms, including the approximation ones, the solution is obtained within the set of executable trip plans $\mathcal{Z}$ that  satisfies the space--time constraints as defined by Eq.~\ref{eq_dynamic_constraints} and capacity constraints as defined by Eq.~\ref{eq_dynamic_cap_constraints}. Thus, all of these mechanisms are RC-feasible.
\end{proof}

\begin{table}[t]
  \begin{center}
    \caption{Summary of the proposed framework of the solution algorithms}
    \begin{tabular}{c|c|c|c|c|c} \hline
	Myopic/ & Reports at & Maximum & Computational & Efficiency & \\
	Non-Myopic & a time-step & options to keep & complexity & of allocation & \\ \hline
	& &1 & \multirow{2}{*}{small} & \multirow{2}{*}{low} & FCFS \\ 
	& Sequential&N & & & \\
	\multirow{2}{*}{Myopic}& &$\infty$ &\multirow{2}{*}{$\updownarrow$} &\multirow{2}{*}{$\updownarrow$} & \\ \cline{2-3}
	& &1 & & & \\
	& Simultaneous&N &\multirow{2}{*}{large} &\multirow{2}{*}{high} & \\
	& &$\infty$ & & &\multirow{2}{*}{$\updownarrow$}  \\ \cline{1-5}
	& &1 & \multirow{2}{*}{middle} & \multirow{2}{*}{middle} & \\ 
	& Sequential&N & & & \\
	\multirow{2}{*}{Non-Myopic}& &$\infty$ &\multirow{2}{*}{$\updownarrow$} &\multirow{2}{*}{$\updownarrow$} & \\ \cline{2-3}
	& &1 & & & \\
	& Simultaneous&N &\multirow{2}{*}{extra large} &\multirow{2}{*}{extra high} &\\
	& &$\infty$ & & &Optimal \\ \hline
    \end{tabular}
\label{tab_summary_algorithms}
  \end{center}
\end{table}

We summarize the proposed framework of the solution algorithms, including the approximation algorithms in Table~\ref{tab_summary_algorithms}. We provide myopic/non-myopic algorithms that process the information reported from agents at the same time-step sequentially or simultaneously and keep the pre-defined maximum options. The framework provides a wide range of trade-offs between the computational complexity and the efficiency of allocation. At the simplest, the myopic algorithm that sequentially processes reports at a time-step and keeps only one option coincides with the FCFS algorithm that requires a low computational complexity, but is not efficient. In contrast, a non-myopic algorithm that simultaneously processes reports at a time-step and does not limit the number of options, achieves the optimal allocation that maximizes the discounted social welfare given by Eq.~\ref{eq_def_DSW} but requires extremely high computational efforts.
As shown in Table~\ref{tab_summary_algorithms}, we provide wide-range algorithms between these two extreme cases. All of these algorithms are RC-feasible and can be interrelatedly implemented by a graph algorithm such as the ZDD.

%
\section{\label{sec_numerical}Numerical Study}									
%
%

Although we provide a wide-range of RC-feasible algorithms in the previous section, they have trade-offs between efficiency and computational time. In this section, we explore these trade-offs using numerical studies. In a numerical analysis, we consider a simple network representing the setting where the trip demands of tourists interferes with the background trip demand of commuters. 
In the study, we use Graphillion\footnote{\url{https://github.com/takemaru/graphillion}}, a Python software package on search, optimization, and enumeration using ZDD.

\subsection{Experimental setup}

\begin{figure}[t]
\begin{center}
	    \includegraphics[width = 0.5\hsize]{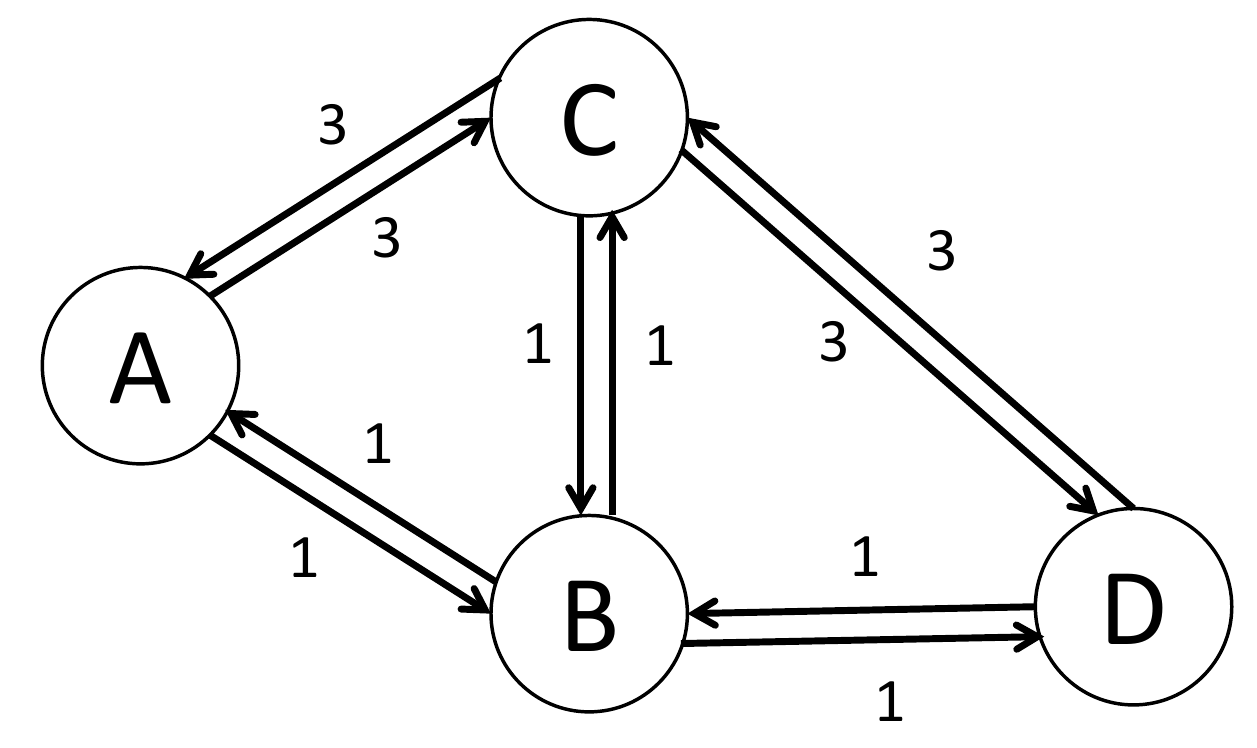}\\
          \caption{Capacity of traffic resources for each edge} 
	    \label{fig_SampleNetwork_capacity}
\end{center}
\end{figure}


Fist, we introduce a simple experimental setting to evaluate the basic performance of the proposed algorithm. In this setting, we set $\bar{T} = 8$. Specifically, we consider discrete time $T= \{1,2, \ldots, 8\}$ in this analysis. We also assume that the time discount rate is $\beta = 1$. We show the experimental setup for the \emph{static agent-type setting} in Table~\ref{tab_numerical_conditions}, which we state in detail in the following parts.

\subsubsection{Network}

We use the simple network that has already been shown in Fig.~\ref{fig_SampleNetwork}. The numbers in brackets in the figure show the required travel time on each edge. The capacity of traffic resources, that is, the possible number of agents using the edge simultaneously at one time step is shown in Fig.~\ref{fig_SampleNetwork_capacity}. We assume that the capacity is constant over time.
In this numerical study, we consider Nodes A and D to be residential and office areas, and Nodes B and C to be amusement areas. We set the active time for the facility on Node B as $b(B) = \{3,4,5\}$, while the facilities on other nodes are set to be active for the entire time duration in $T$.

\begin{table}[t]
\small
  \begin{center}
    \caption{Experimental setup in the \emph{static agent-type} setting}
    \begin{tabular}{lll} \hline
      Network & Set of nodes	& $\{A,B,C,D\} \in \mathcal{N}$\\
			&Set of edges& $\{AB, BA, AC, CA, BC, CB, BD, DB, CD, DC\} \in \mathcal{E}$\\
	&Travel time			& $\tau_{AB}=\tau_{BA}=\tau_{AC}=\tau_{CA}=\tau_{BC}=\tau_{CB}=\tau_{BD}=\tau_{DB}=1,$\\
						&&$\tau_{CD}=\tau_{DC}=2$, as shown in Fig.~\ref{fig_SampleNetwork}.\\
	&Traffic resource capacity &$\forall{t}: C_{AB,t}=C_{BA,t}=C_{BC,t}=C_{CB,t}=C_{BC,t}=C_{CB,t}=1,$\\		
						&&$\forall{t}: C_{AC,t}=C_{CA,t}=C_{CD,t}=C_{DC,t}=3$, as shown in Fig.~\ref{fig_SampleNetwork_capacity}.\\
	&Active time for nodes & $b(B) = \{3,4,5\}$\\
	\hline
	
	Agents&\\
     	\hspace{0.2cm} Passing agents $I_P^{AD}$ 	&Origin						&Node A\\
										&Destination					&Node D\\
										&Value of time and location		&$v_i^A = 0,\:\:  v_i^B = 0,\:\: v_i^C = 0,\:\:  v_i^D \sim N(250, 10000)$\\
										&Departure time				&$t_i^B$ is distributed uniformly in $\{0,1,2\}$\\
										&Deadline					&$t_i^E= t_i^B + 4$\\
     	\hspace{0.2cm} Passing agents $I_P^{DA}$ 	&Origin						&Node D\\
										&Destination					&Node A\\
										&Value of time and location		&$v_i^A \sim N(250, 10000),\:\:  v_i^B = 0,\:\: v_i^C = 0,\:\:  v_i^D = 0$\\
										&Departure time				&$t_i^B$ is distributed uniformly in $\{2,3,4\}$\\
										&Deadline					&$t_i^E= t_i^B + 4$\\
     	\hspace{0.2cm} Cruising agents $I_C$ 		&Origin						&Node A\\
										&Destination					&Node A\\
										&Value of time and location		&$v_i^A \sim N(50, 400),\:\:  v_i^B \sim N(200, 6400),\:\: v_i^C \sim N(100, 1600),\:\:  v_i^D \sim N(50, 400)$\\
										&Departure time				&$t_i^B$ is distributed uniformly in $\{0,1,2\}$\\
										&Deadline					&$t_i^E= t_i^B + 6$\\
										&Additional constraints			&Spend at least one time step at Node B during active time $b(B)$\\
	\hline
    \end{tabular}
\label{tab_numerical_conditions}
  \end{center}
\end{table}

\subsubsection{Passing and cruising agents}

We consider two types of agents, passing and cruising agents, denoted by $I_P$ and $I_C$, respectively, on the condition that $I_P \cup I_C = I$. The main objective of passing agents is to move between places. The origin and destination of passing agents are different. In contrast, the main objective of cruising agents is spending time at some places, enriching their experience by cruising around. The origin and destination of cruising agents are the same. A key motivating example for this setting is the traffic services in tourist sites\footnote{This setting can be expanded to demand response services at tourist sites, by introducing congestion pricing. We discuss this in Section~\ref{sec_discussions}. }. Passing agents express the background traffic demand that is represented by morning or evening commuters, and the cruising agents express tourists cruising multiple areas in the network. In the following section, we describe the agents in this analysis in detail.

\begin{itemize}
\item Passing Agents


In our analysis, we introduce passing agents as background traffic. We consider two types of passing agents, namely, traveling from Node A to Node D and vice versa, representing morning and evening commuters. Sets of these agents are denoted by $I_P^{AD}$ and $I_P^{DA}$, respectively, on the condition that $I_P^{AD} \cup I_P^{DA} = I_P$. For each agent $i \in I_P^{AD}$ traveling from Node A to Node D with the value $\bvec{v_i} = \{ v_i^A, v_i^B, v_i^C, v_i^D,\}$, we set the values as follows:
\begin{equation}
	v_i^A = 0,\:\:  v_i^B = 0,\:\: v_i^C = 0,\:\:  v_i^D \sim N(250, 10000),
\label{eq_value_p1}
\end{equation}
where $N(\mu, \sigma^2)$ means the normal distribution with the mean of $\mu$ and the variance of $\sigma^2$. Similarly, for each agent $i \in I_P^{DA}$, we set the values as follows:
\begin{equation}
	v_i^A \sim N(250, 10000),\:\:  v_i^B = 0,\:\: v_i^C = 0,\:\:  v_i^D = 0.
\label{eq_value_p2}
\end{equation}
We assume that the departure time $t_i^B$ of each agent $i \in I_P^{AD}$ is distributed uniformly in $\{0,1,2\}$ whereas the departure time $t_i^B$ of each agent $i \in I_P^{DA}$ is distributed uniformly in $\{2,3,4\}$. The agents cannot stay at the origin, in that they have to start immediately after they request the trip.
We further assume that the deadline for the trip $t_i^E$ is set by $t_i^E= t_i^B + 4$ for all passing agents $i \in I_P$, meaning that passing agents have an upper-limit travel time of $4$. However, they have to spend the final time-step to stay at the destination.

\item Cruising Agents

Unlike passing agents, we introduce cruising agents as visitors that have a large amount of flexibility on their trips. The efficiency of traffic services can be improved by using this flexibility effectively.
We assume that the origin and destination of all cruising agents is Node A, representing the visitor staying at Node A and intending to cruise the area nearby. The main purpose of the cruising agents is visiting the facility on Node B. Thus, we set the constraints so that all the cruising agents must spend at least one time step at Node B during active time $b(B)$, except when the agents cannot obtain a permit and thus, cancel their trips. We set the value of these agents as follows:
\begin{equation}
	v_i^A \sim N(50, 400),\:\:  v_i^B \sim N(200, 6400),\:\: v_i^C \sim N(100, 1600),\:\:  v_i^D \sim N(50, 400).
\label{eq_value_c}
\end{equation}
We assume that the departure time $t_i^B$ of each agent $i \in I_C$ is distributed uniformly in $\{0,1,2\}$ and the deadline for trip $t_i^E$ is set by $t_i^E= t_i^B + 6$, meaning that the cruising agents have an upper-limit travel time of $6$.

\end{itemize}

In this setting, passing agents traveling from Node A to Node D interfere with the cruising agents when they depart from Node A; passing agents traveling from Node D to Node A interfere with cruising agents when they arrive at Node D. Thus, appropriate control of traffic resources is desired. 

\subsubsection{\label{sec_dynamic_type}Experimental setup for the dynamic agent-type setting}

As stated in Section~\ref{sec_flexible_behavior}, we basically consider the \emph{static agent-type setting}. However, in Section~\ref{sec_results_dynamic_type}, we explore our proposed mechanism in the \emph{dynamic agent-type setting}, where the reward and constraints of agents are changed by the situation.
In the numerical study in Section~\ref{sec_results_dynamic_type}, we specifically assume that cruising agents that visit Node B change their mind with a probability of $1/2$, such that,
${v'}_i^C = 3 \times v_i^C$
, and ${t'}_i^E= t_i^E + 2$.
Namely, the value ${v'}_i^C$ of staying at Node C becomes three times as high as the previous value $v_i^C$ and the deadline of the trip ${t'}_i^E$ extends by two time steps, when compared to the previous deadline $t_i^E$. This captures the setting where a part of the visitors who visit Node B are eager to visit Node C, additionally.

\subsection{Benchmark Mechanism}

In our numerical analysis, we adopted the offline-optimal and FCFS mechanisms as benchmarks. The former is the optimal assignment stated in Definition~\ref{def_offline_optimal}, assuming that all future agents are given to the mechanism in advance. The social welfare achieved by the offline-optimal mechanism is considered the upper-limit of the given situation. FCFS is commonly introduced in settings that are considered in this study. It is not efficient in that traffic resources that were once allocated to users cannot be reallocated. 

We evaluate a wide-range of RC-feasible algorithms over these benchmarks. In myopic settings, our proposed exact algorithm shown in Algorithm~\ref{algo_main}, which maximizes the discounted social welfare is called as \emph{Proposed (myopic)} in this section. We also evaluate the approximate algorithms. Our proposed algorithm with \emph{Per agent} approximation as shown in Algorithm~\ref{algo_allocation_approximate} is called \emph{Per agent (myopic)}. The algorithm with \emph{Branch cutting} approximation is stated by the maximum number of options, like \emph{Per agent (myopic) with maximum branch of $N_{branch}^{max}$.}
Similarly, in non-myopic settings, our proposed algorithm shown in Algorithm~\ref{algo_main_nonmyopic} is called as \emph{Proposed (non-myopic)}. The approximation algorithms \emph{Per agent} and \emph{Branch cutting} are also considered and called \emph{Per agent (non-myopic) with maximum branch of $N_{branch}^{max}$.} The number of sample scenarios used in these algorithms is set as $M=10$.

We assume that the offline-optimal mechanism has perfect knowledge of the space--time constraints and reward functions of future agents, while our proposed non-myopic algorithms knows the stochastic demand $\tilde{\mathcal{D}}$ that includes the deterministic space--time constraints and stochastic model of the reward function as expressed by Eqs.~\ref{eq_value_p1}, \ref{eq_value_p2}, and \ref{eq_value_c}. In contrast, FCFS and our proposed myopic algorithms have no knowledge of future agents at all.
For \emph{dynamic agent-type setting}, we assume that the offline-optimal mechanism additionally has perfect knowledge of the changing mind of agents, specifically in terms of the individual's identity and how that agent changes in the future. In contrast, the FCFS mechanism and our proposed myopic and non-myopic algorithms have no knowledge of the changing minds of agents.

\subsection{Results}

In this part, we present the results of the numerical analysis and discuss the advantage of our proposed mechanism over the benchmark mechanism.

\subsubsection{Efficiency}
%
%

\begin{figure}[t]
\begin{center}
	\begin{minipage}{0.49\hsize}
	\begin{center}
	    \includegraphics[width = \hsize]{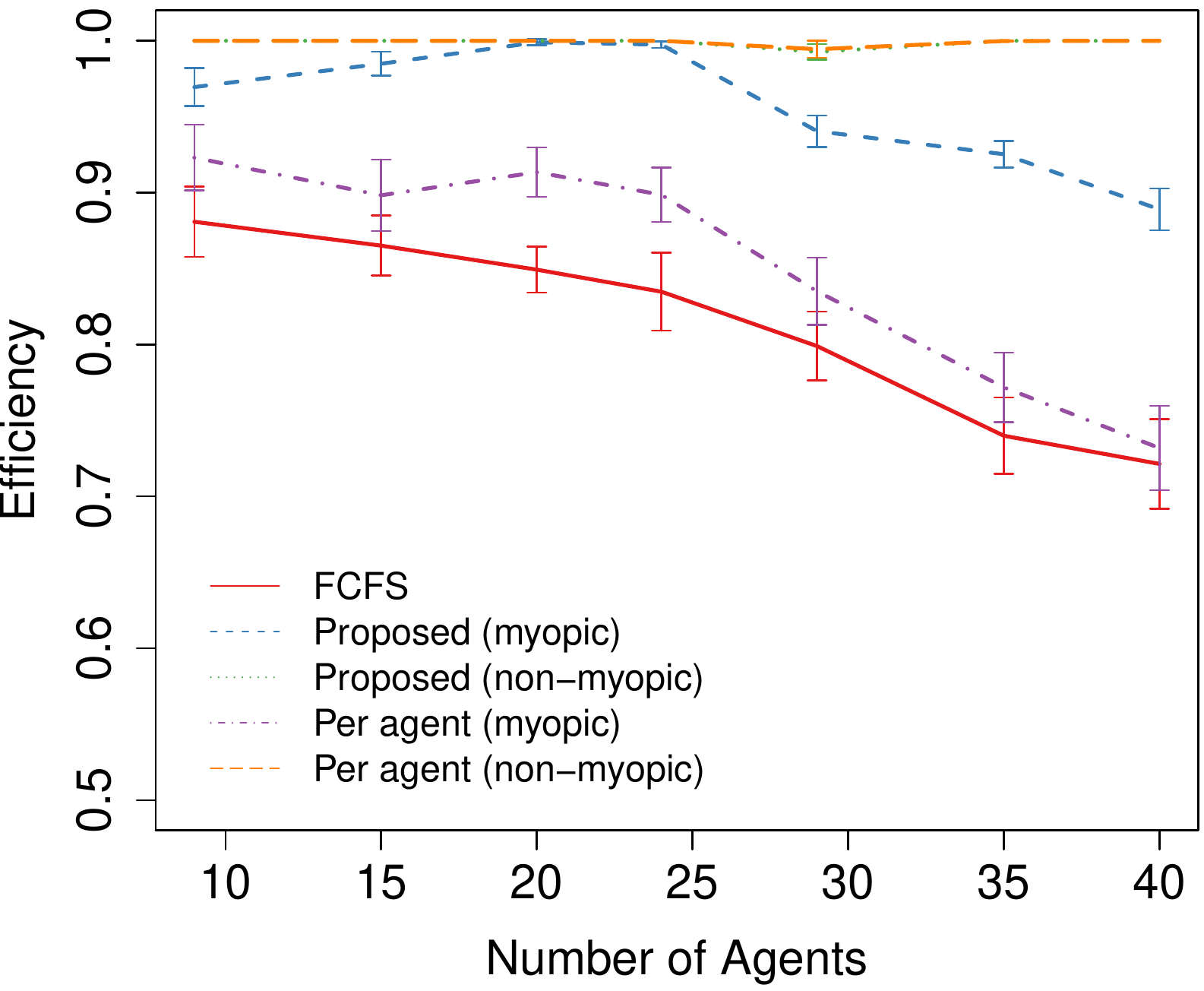}\\

	(a) Fraction of cruising agents = 0.1
	\end{center}
	\end{minipage}
	\begin{minipage}{0.49\hsize}
	\begin{center}
	    \includegraphics[width = \hsize]{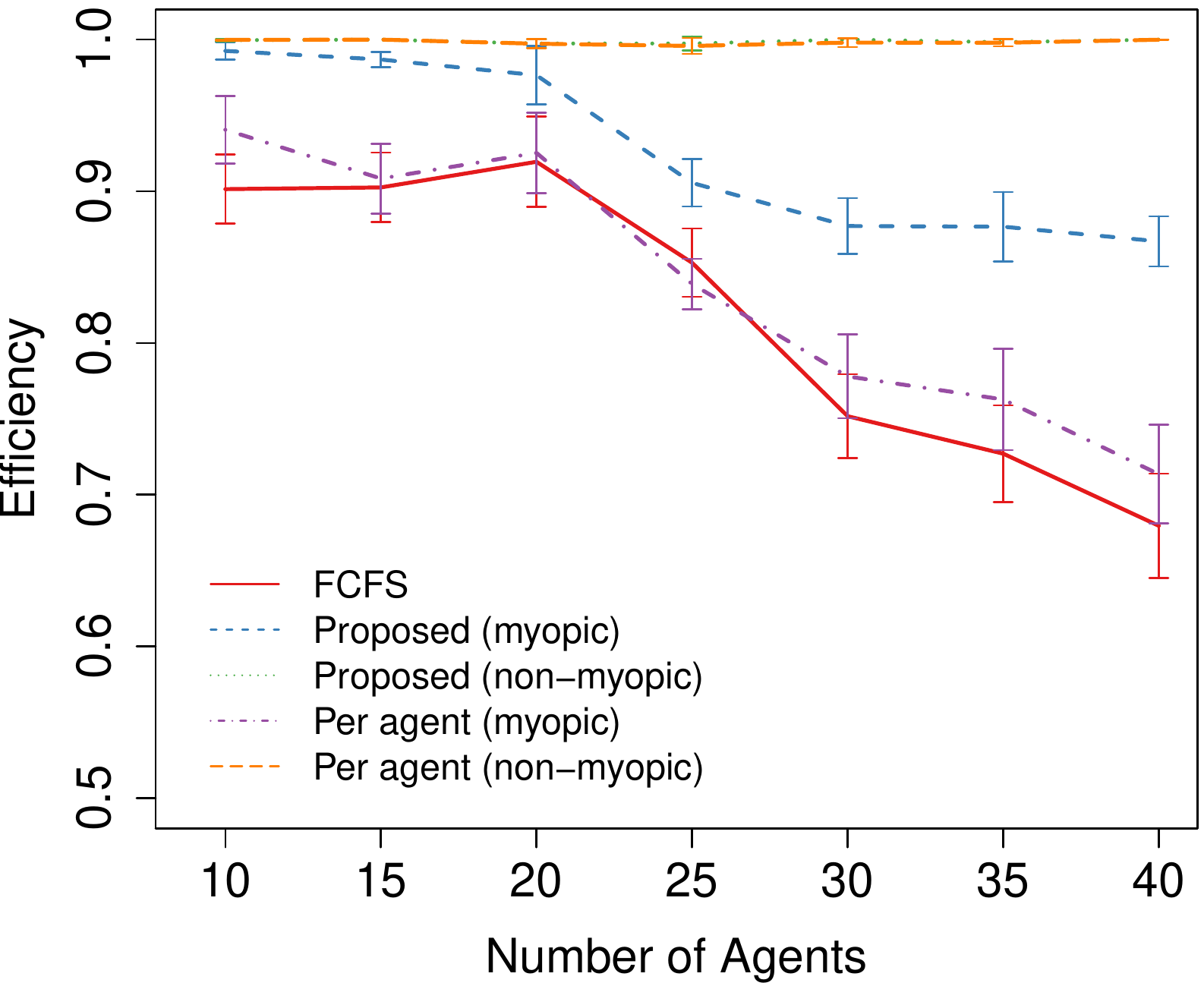}\\

	(b) Fraction of cruising agents = 0.2
	\end{center}
	\end{minipage}

          \caption{Efficiency in \emph{static agent-type} setting} 
	    \label{fig_Efficiency}
\end{center}
\end{figure}
%
%
%
%

First, we evaluate the \emph{efficiency} of our proposed algorithms in \emph{static agent-type setting}. Here, we define \emph{efficiency} as the total social welfare achieved by each mechanism as a proportion of the offline-optimal welfare. We consider two cases, with the ratio of $10\%$ or $20\%$ of cruising agents over all agents, and vary the number of agents from $10$ to $40$. The experiment was repeated 10 trials for each setting. The results of \emph{Proposed} and \emph{Per agent} algorithms in myopic and non-myopic settings are shown in Fig.~\ref{fig_Efficiency}. 
The results in settings where the fraction of cruising agents is 0.1 and 0.2 are shown in Fig.~\ref{fig_Efficiency}(a) and Fig.~\ref{fig_Efficiency}(b), respectively. The efficiency of the FCFS mechanism is also shown in the same figure. The plots in these figures show the mean value of the trials and the error bars express 95\% confidence. 

As seen in Fig.~\ref{fig_Efficiency}, the efficiency of \emph{FCFS} decreases with the interference of agents. On the other hand, \emph{Proposed (myopic)} achieves better efficiency than \emph{FCFS} does, and \emph{Proposed (non-myopic)} retains almost the same quality as the optimal allocation does.
While the proposed mechanism guarantees the minimum service quality to the early-coming agents with space--time prism constraints, it takes later-coming high-valued agents into consideration in its decision-making and thus, achieves high efficiency.

As for the \emph{Per agent} approximation, \emph{Per agent (myopic)} results in losses in the efficiency that is almost near the level of \emph{FCFS}, while \emph{Per agent (non-myopic)} does not result in losses in efficiency. 
Although the decision-making process in \emph{Per agent} algorithm shown in Algorithm~\ref{algo_allocation_approximate} is divided for each agent, the information of all agents reporting at the same time-step is taken into consideration in the non-myopic algorithm and thus, the efficiency loss is suppressed.


\begin{figure}[t]
\begin{center}
	    \includegraphics[width = 0.5\hsize]{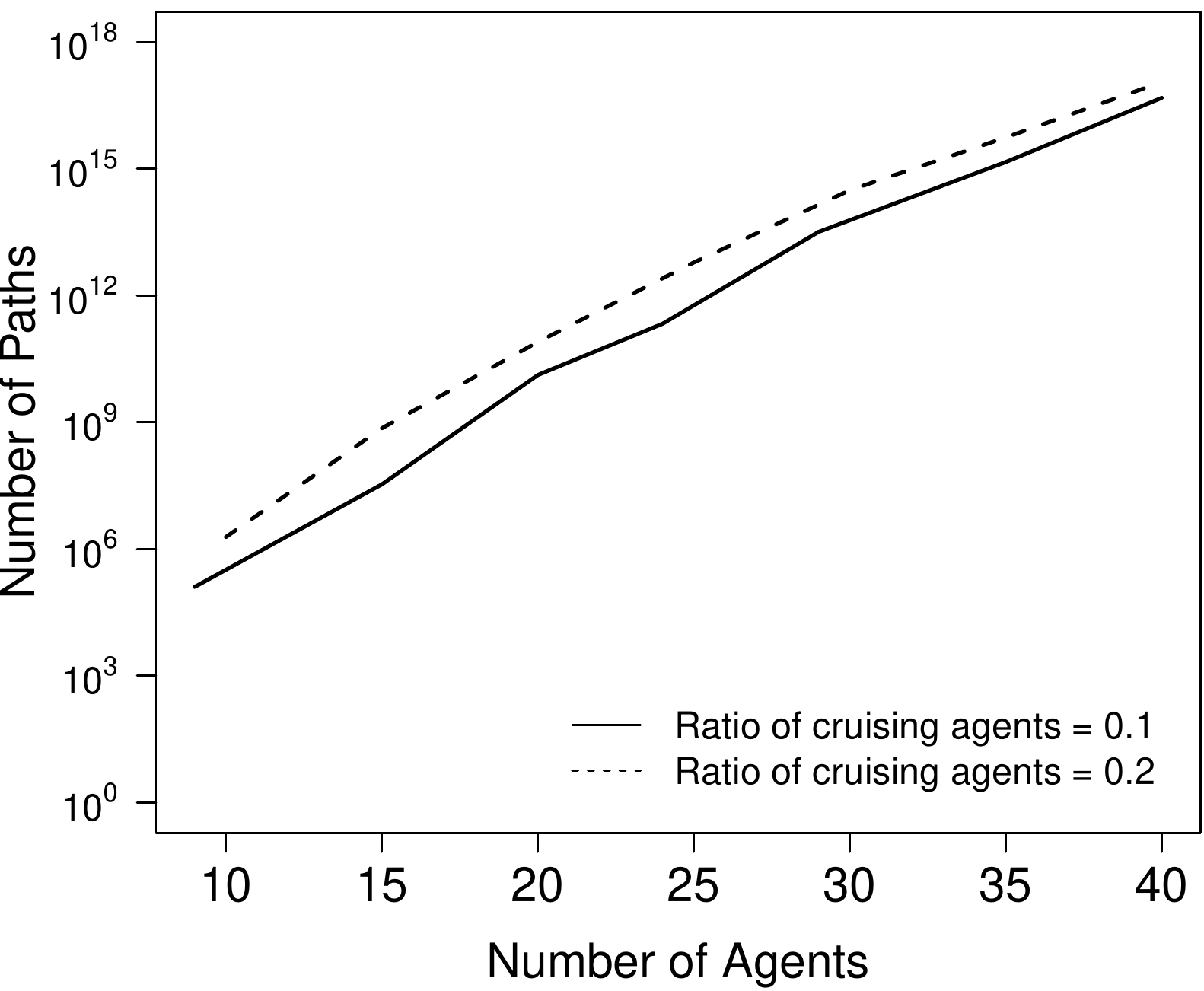}\\

          \caption{The total number of executable plans for each setting} 
	    \label{fig_num_paths}
\end{center}
\end{figure}

In Fig.~\ref{fig_num_paths}, we show the number of enumerated executable plans that keep space--time constraints and capacity constraints considered in the numerical analysis stated above. As the number of agents increases, the number of executable plans also increases extensively, owing to the combinatorial nature of the problem. It reaches $1.13 \times 10^{17}$ when the number of agents is $40$ and the fraction of cruising agents is $0.2$.
\emph{Proposed (non-myopic)} offers a full-search of all these plans for making decisions and thus achieves the highest efficiency. On the other hand, \emph{Proposed (myopic)} makes decisions myopically, but keeps all executable plans in the future, under the given myopic states with all agents reported until that time. This results in higher efficiency when compared to the FCFS.


\begin{figure}[t]
\begin{center}
	\begin{minipage}{0.49\hsize}
	\begin{center}
	    \includegraphics[width = \hsize]{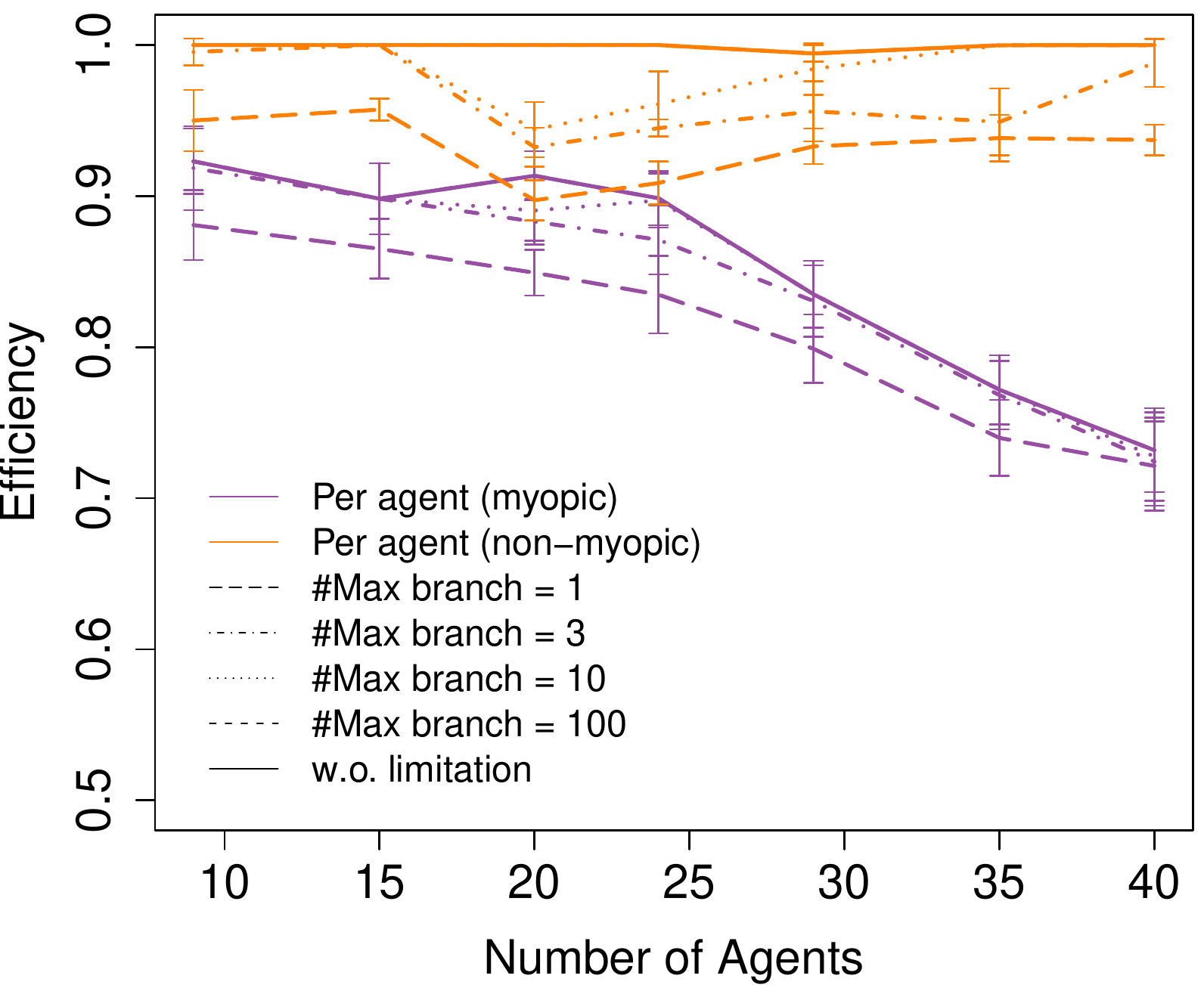}\\

	(a) Fraction of cruising agents = 0.1
	\end{center}
	\end{minipage}
	\begin{minipage}{0.49\hsize}
	\begin{center}
	    \includegraphics[width = \hsize]{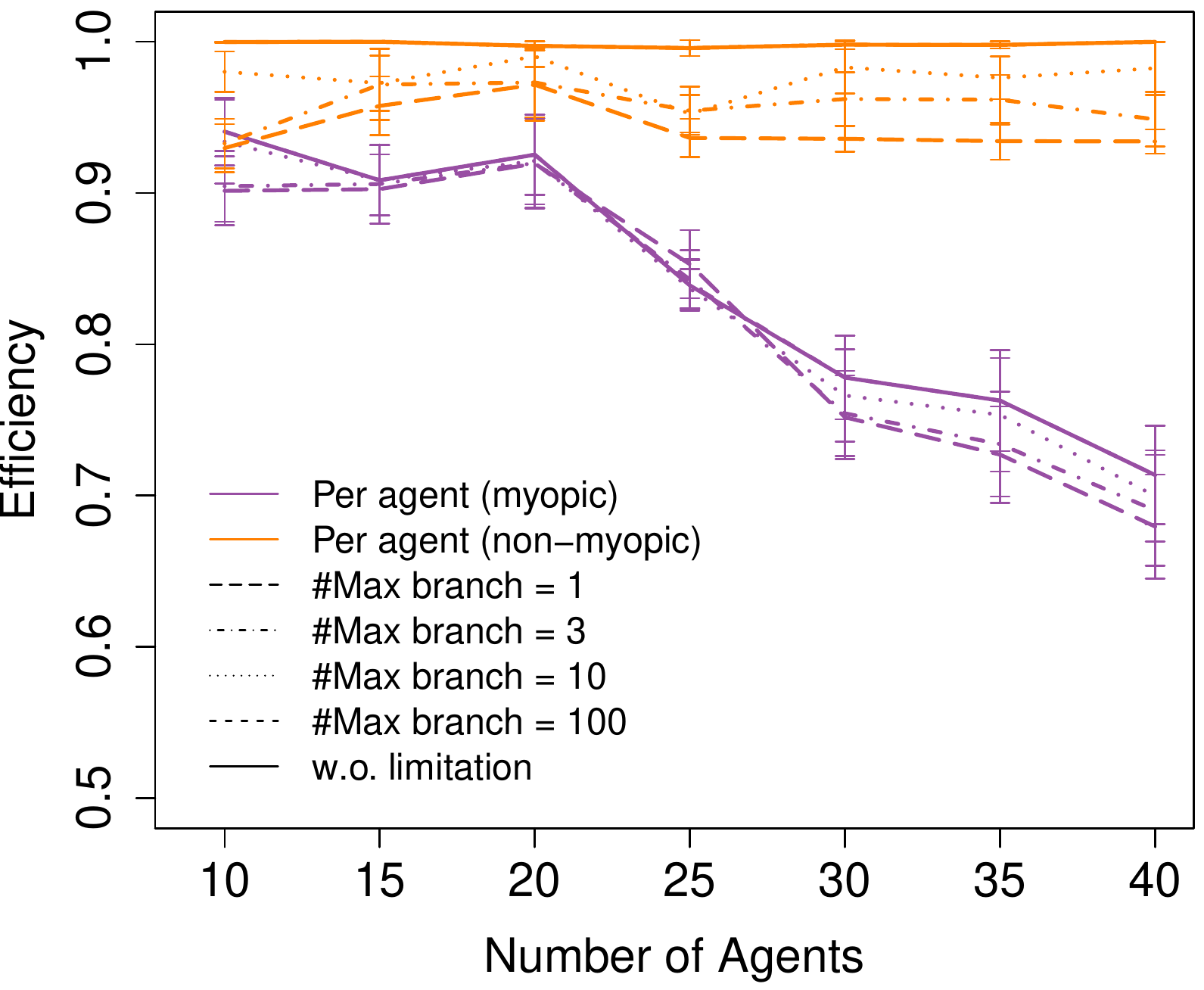}\\

	(b) Fraction of cruising agents = 0.2
	\end{center}
	\end{minipage}
\end{center}

          \caption{Efficiency of \emph{Branch cutting} algorithms} 
	    \label{fig_efficiency_Branch cutting}
\end{figure}

Then, we evaluate the performance of \emph{Branch cutting} approximation as introduced into \emph{Per agent (myopic)} and \emph{Per agent (non-myopic)} algorithms, running a numerical experiment in the same condition as in Fig.~\ref{fig_Efficiency}.
For each algorithm, we vary the maximum branch $N_{branch}^{max}$ from 1 to 100, and evaluate the performance. 
In this analysis, we choose the highest $N_{branch}^{max}$ branches in terms of the expected discounted social welfare at each step of the mechanism.
The results are shown in Fig.~\ref{fig_efficiency_Branch cutting}. Note that, in this figure, the \emph{Per agent (myopic)} algorithm with maximum branch $N_{branch}^{max} = 1$ coincides with the FCFS mechanism.
As we can see from this figure, the efficiency decreases as the maximum branch decreases. %
In these settings, the performance of the \emph{Branch cutting} algorithm with maximum branch $N_{branch}^{max} = 10$ is fairly close to the performance, without the limitation of the maximum branch in myopic settings, but it still has a gap in non-myopic settings. 
In the setting where the fraction of cruising agents is 0.1 (Fig.~\ref{fig_efficiency_Branch cutting}(a)), the results with maximum branch $N_{branch}^{max} = 3$ outperform the results of FCFS in myopic settings. It implies that our proposed algorithms may achieve considerably higher efficiency than FCFS, by keeping just a few options, depending on the situation.

\subsubsection{Rejection rate}

%
%
%

\begin{figure}[t]
\begin{center}
	\begin{minipage}{0.49\hsize}
	\begin{center}
	    \includegraphics[width = \hsize]{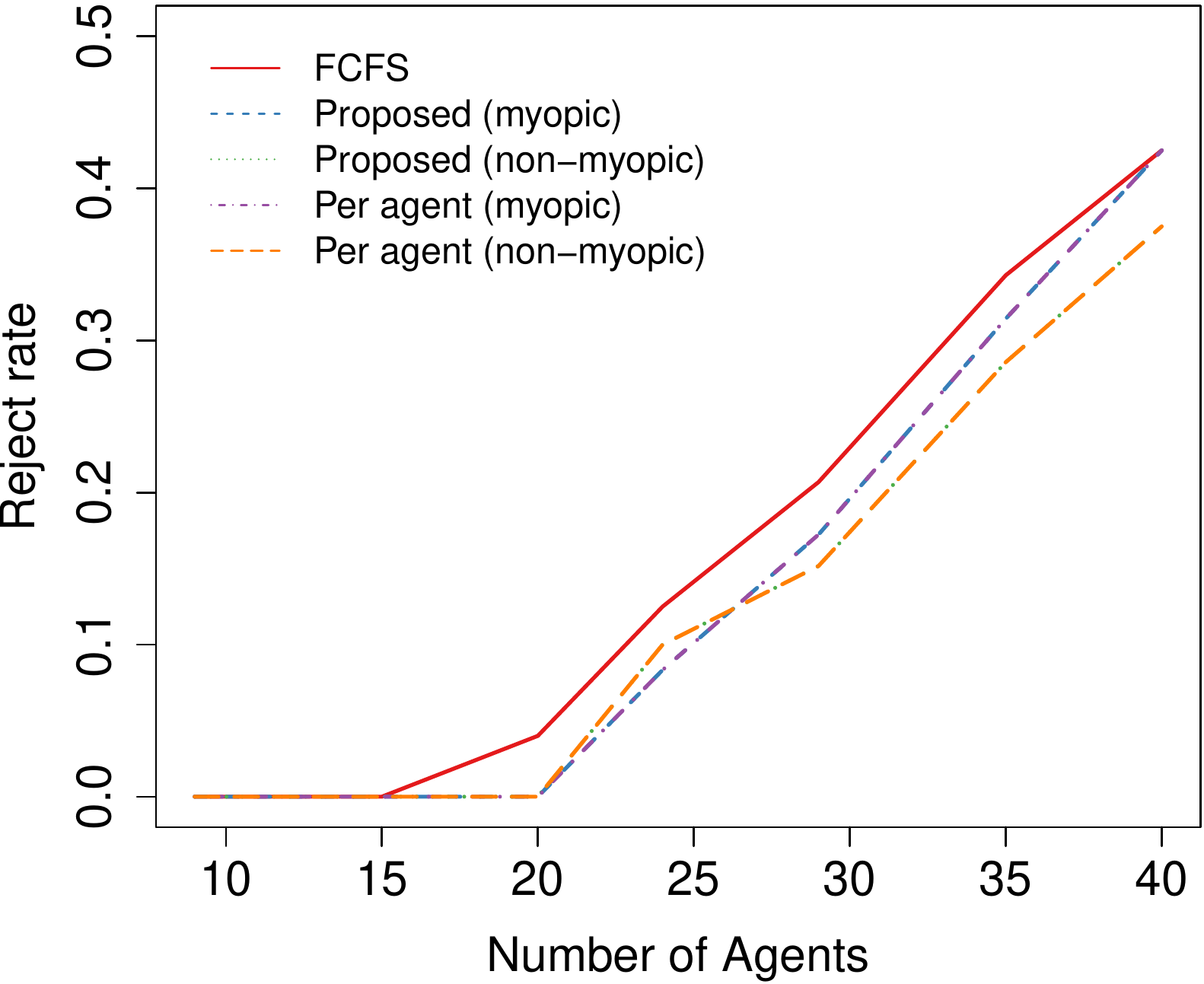}\\

	(a) Fraction of cruising agents = 0.1
	\end{center}
	\end{minipage}
	\begin{minipage}{0.49\hsize}
	\begin{center}
	    \includegraphics[width = \hsize]{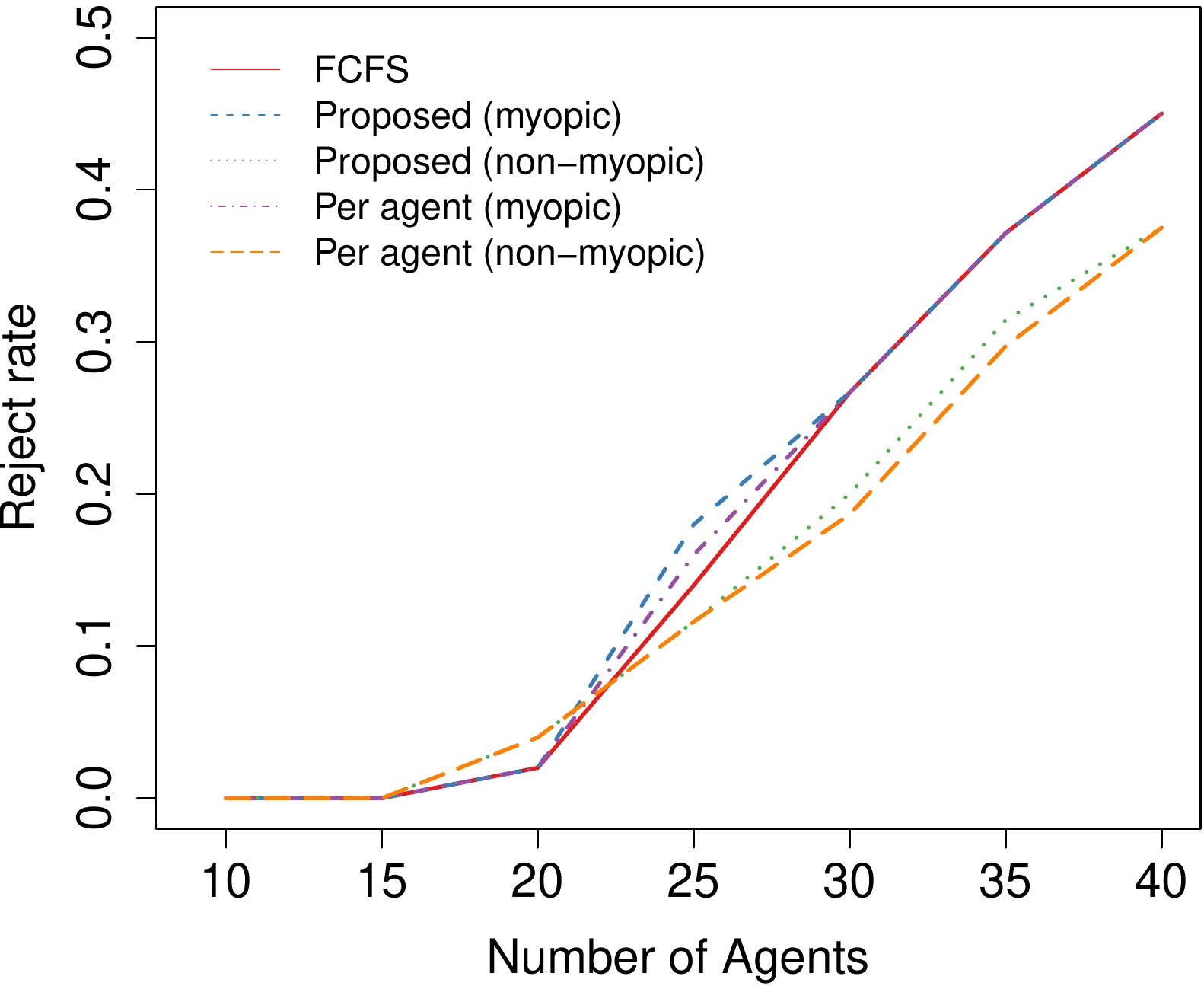}\\

	(b) Fraction of cruising agents = 0.2
	\end{center}
	\end{minipage}

          \caption{Rejection rate} 
	    \label{fig_rejected_rate}
\end{center}
\end{figure}

In Fig.~\ref{fig_rejected_rate}, we show the rejection rate as the fraction of the number of rejected agents over the number of all agents, by each algorithm.
In Fig.~\ref{fig_rejected_rate}(a), where the fraction of cruising agents is $0.1$, the rejection rates of our proposed algorithms are lower than the FCFS. In contrast, in Fig.~\ref{fig_rejected_rate}(b), where the fraction of cruising agents is $0.2$, the rejection rates of our proposed non-myopic algorithms are lower, but those of our proposed myopic algorithms are higher than FCFS. 
In both cases, the rejection rate of our proposed non-myopic algorithms are higher than myopic algorithms when the number of agents is small, in the range from 20 to 25 in Fig.~\ref{fig_rejected_rate}(a) or from 15 to 20 in Fig.~\ref{fig_rejected_rate}(b). We can see from this, that the non-myopic mechanism makes decisions to reject early-coming agents to keep traffic resources for later-coming high-value agents in order to achieve high efficiency.

As we can seen from Fig.~\ref{fig_rejected_rate}, in our experimental settings, the rejection rate is considerably high and exceeds 0.4 in the largest settings. It seems to be too high if the discussion is based on an emerged demand, for example, the actual records of taxi services in the real world. However, in this paper, we aim to consider hidden demands that have not emerged because users give up on traveling before they request the service, making judgments that the trip would possibly suffer their space--time constraints. Considering that such potential users may become customers if the mobility services are improved, the rejection rates in Fig.~\ref{fig_rejected_rate} are rational. In such settings with high rejection rates, it is difficult to solve the combined problems of finding a combination of feasible agents and allocating traffic resources for those agents using ILP. Graph-based algorithms, including our proposed algorithms using the ZDD, are appropriate in these settings.


\subsubsection{\label{sec_calc_time}Calculation time}

\begin{figure}[t]
\begin{center}
	\begin{minipage}{0.49\hsize}
	\begin{center}
	    \includegraphics[width = \hsize]{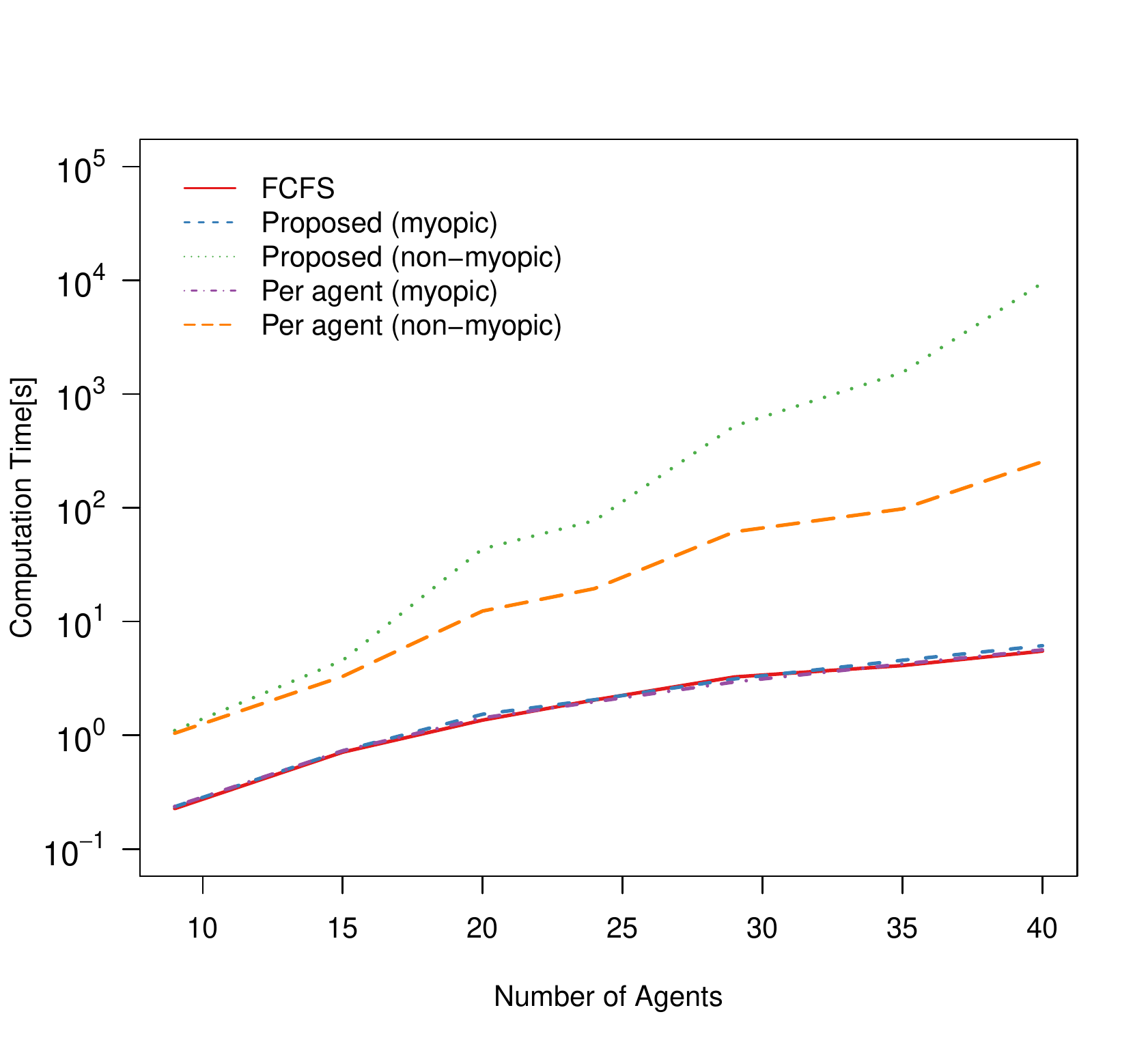}\\

	(a) Fraction of cruising agents = 0.1
	\end{center}
	\end{minipage}
	\begin{minipage}{0.49\hsize}
	\begin{center}
	    \includegraphics[width = \hsize]{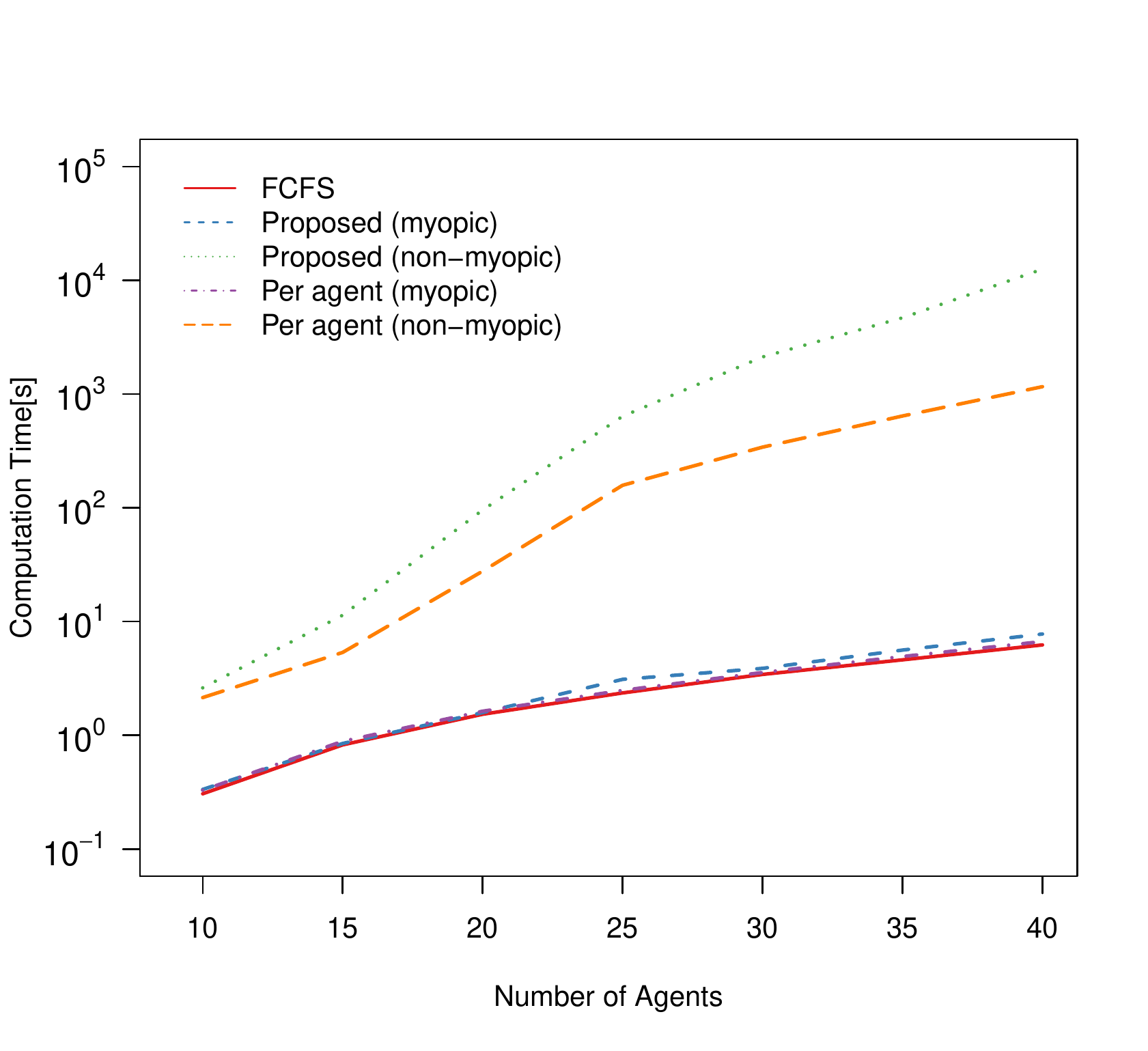}\\

	(b) Fraction of cruising agents = 0.2
	\end{center}
	\end{minipage}

          \caption{Calculation Time} 
	    \label{fig_CalcTime}
\end{center}
\end{figure}

We show the calculation time of the proposed mechanism in Fig.~\ref{fig_CalcTime}. All the analysis was performed on a machine with an Intel(R)Xeon(R)E5-2690v4 CPU@2.60GHz, using a single core, and 56GB of RAM.
As we can see from these figures, the calculation time of the proposed non-myopic algorithm becomes large as the number of agents increases, while that of the proposed myopic algorithm stays small. However, considering that the non-myopic algorithm treats more than a trillion options (as shown in Fig.~\ref{fig_num_paths}) of the future combinatorial behavior of agents and repeats the optimization process for each action plan and each sample scenario shown in Algorithm~\ref{algo_allocation}, the calculation time is still small, owing to the nature of the ZDD.

As we have shown, our proposed non-myopic algorithm takes a large amount of computation time, but has high performance, while the basic FCFS mechanism can compute faster, but results in poor performance. Within the trade-off between the performance and computation time, our proposed framework offers a wide range of options.
We can achieve better performance by adopting a myopic mechanism with an appropriate maximum branch. Depending on the situation, such as for example, a car-sharing service for limited pre-described customers in which we have enough time to compute, we can introduce a non-myopic mechanism to achieve better performance.

\subsubsection{\label{sec_results_dynamic_type} Efficiency in the dynamic agent-type setting}

\begin{figure}[t]
\begin{center}
	\begin{minipage}{0.49\hsize}
	\begin{center}
	    \includegraphics[width = \hsize]{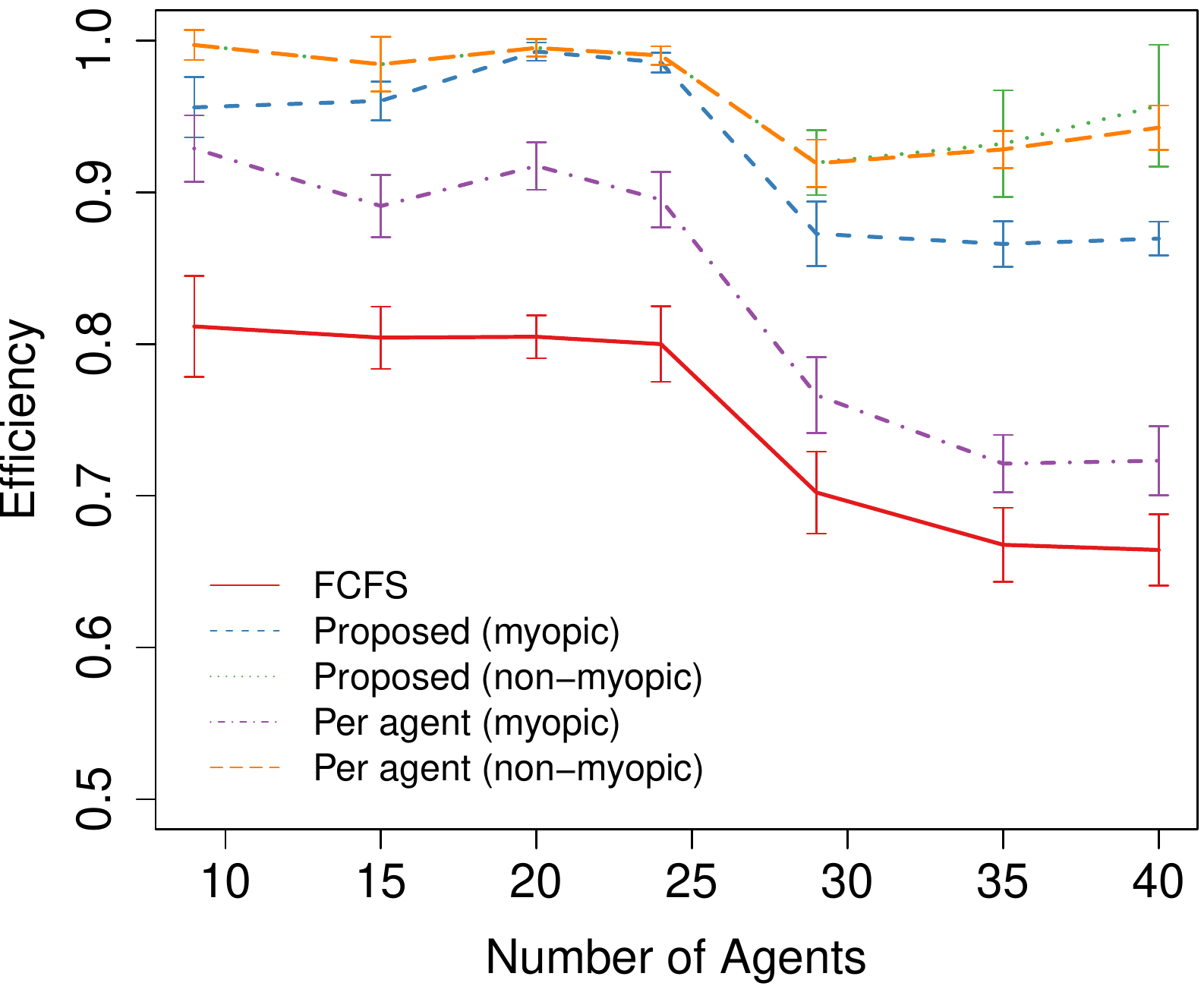}\\

	(a) Fraction of cruising agents = 0.1
	\end{center}
	\end{minipage}
	\begin{minipage}{0.49\hsize}
	\begin{center}
	    \includegraphics[width = \hsize]{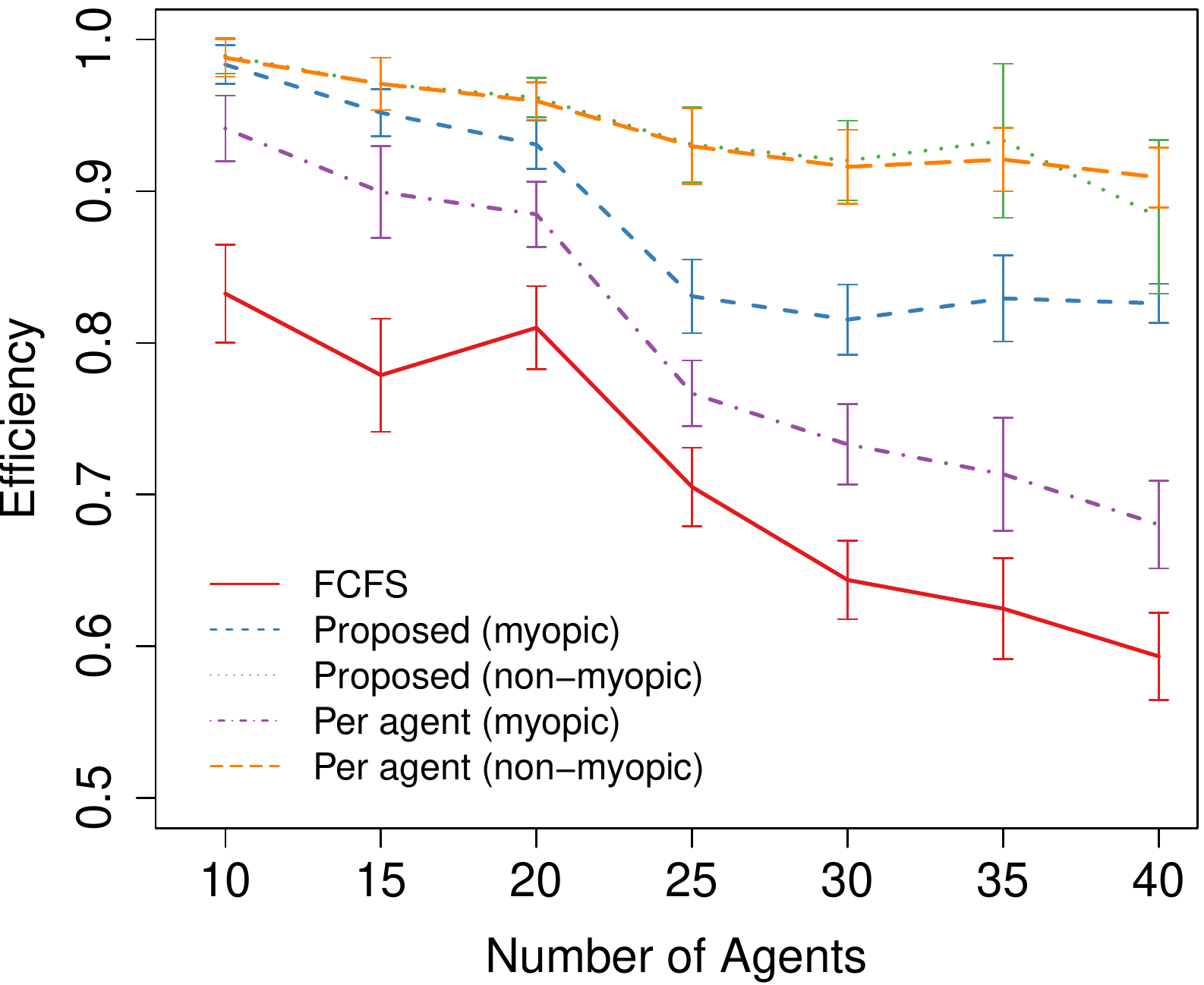}\\

	(b) Fraction of cruising agents = 0.2
	\end{center}
	\end{minipage}

          \caption{Efficiency in \emph{dynamic agent-type setting} } 
	    \label{fig_dynamic}
\end{center}
\end{figure}

Then, we evaluate the \emph{efficiency} in the \emph{dynamic agent-type setting} as stated in Section~\ref{sec_dynamic_type}. The performance of \emph{Proposed}, \emph{Per agent} and FCFS algorithms in both, myopic and non-myopic settings, are shown in Fig.~\ref{fig_dynamic}.
The overall trends in the results are similar to the results in \emph{Static agent-type setting} as shown in Fig.~\ref{fig_Efficiency}, although the non-linearity of the performance increases, because of the settings that include the non-linear mind-change on part of the users.
It is noteworthy that the gap between the performance of \emph{Per agent (myopic)} and that of FCFS becomes larger in a \emph{dynamic agent-type setting} than in a \emph{static agent-type setting}. As we can see from this result, considering the changing minds of agents, the proposed mechanism that sequentially makes a decision at each time, performs well. 
In a non-myopic setting, the performance of \emph{Per agent} algorithm is almost as good as that of \emph {Proposed} algorithm.
Thus, it can be said that a \emph{Per agent} approximation is effective in both, myopic and non-myopic mechanisms, used in dynamic situations.

%
%
%
%

\subsubsection{Evaluation on the large number of agents}

\begin{figure}[t]
\begin{center}
	    \includegraphics[width = 0.5\hsize]{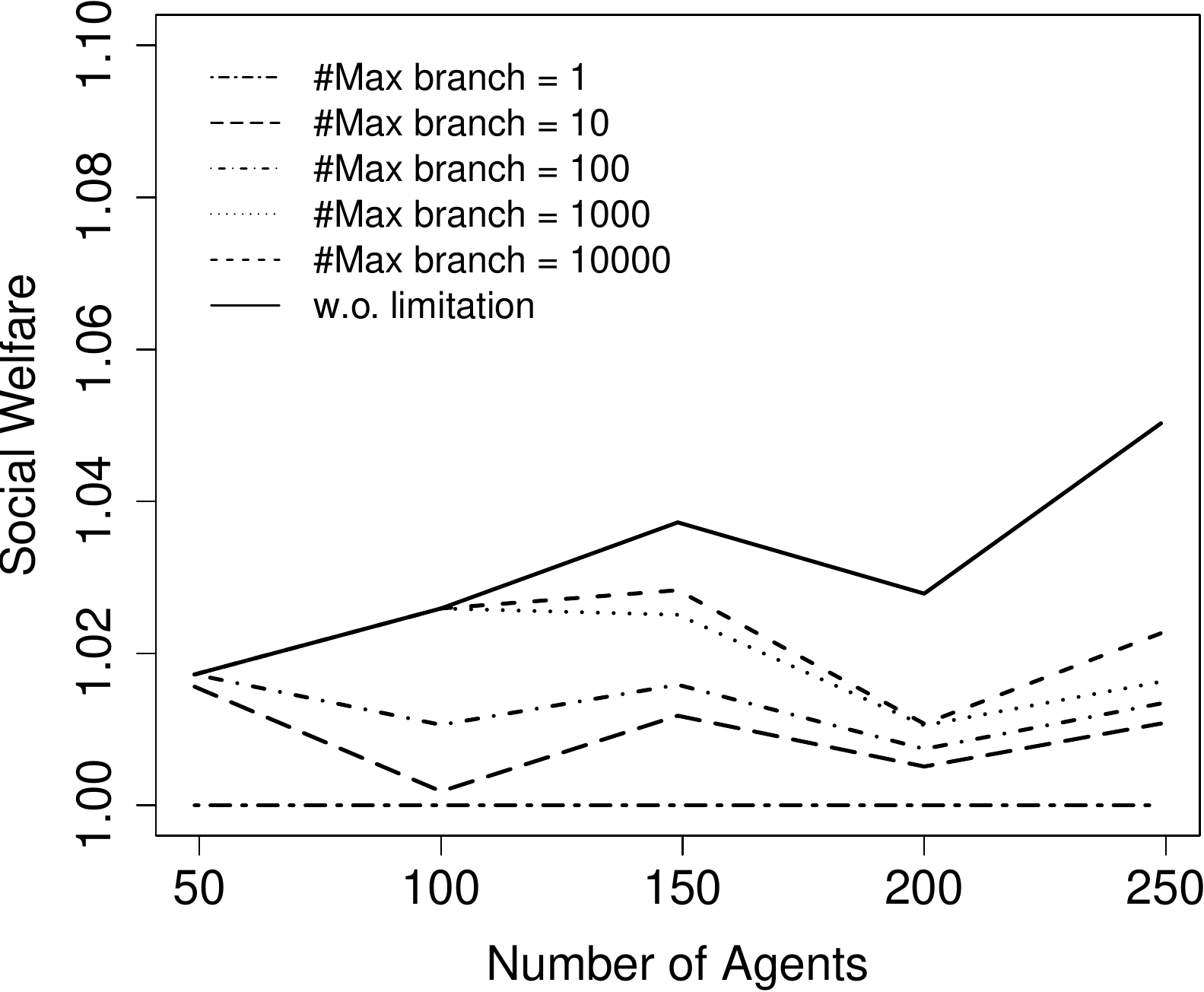}\\

          \caption{Achieved social welfare (vs. FCFS)} 
	    \label{fig_results_LargeScale}
\end{center}
\end{figure}

%
%
%

Finally, we apply our proposed framework to a setting with a large number of agents. 
The network considered in this analysis is the same as the network shown in Fig.~\ref{fig_SampleNetwork}. However, the edge capacity is enlarged. Specifically, the capacity of all edges are multiplied by the number of agents divided by $50$ together with the basic capacity as shown in Fig.~\ref{fig_SampleNetwork_capacity}.
We vary the number of agents from 50 to 250, setting the fraction of cruising agents to $0.1$, and apply our proposed myopic mechanism to the approximation algorithms \emph{Per agent} and \emph{Branch Cutting}. We execute a numerical study and evaluate the achieved total social welfare, varying the max branch $N_{branch}^{max}$ from 1 to 10000. Note the case where the max branch is set to $N_{branch}^{max}=1$ coincides with $FCFS$. In Fig.~\ref{fig_results_LargeScale}, we show the social welfare achieved by each number of max branch divided by that achieved by FCFS. As we can see from this figure, higher social welfare is achieved when we set a higher branch. For example, the proposed algorithm performs about 2 to 3\% higher social welfare when the max branch is set to $N_{branch}^{max}=10000$, while it reaches 2 to 5\% without the limitation of the max branch. 

In our computational environment shown in Section~\ref{sec_calc_time}, it takes only 1500 seconds to calculate the settings with 250 agents. However, owing to the nature of ZDD, the computation requires a large size of RAM. Using an environment with large sized RAM, our proposed framework can be implemented for a service with hundreds of agents.


%
\section{\label{sec_discussions}Conclusions and discussions}									
%
%

In this study, we addressed dynamic traffic resource allocation problems with strict capacity constraints and elastic demand of users with space--time prism constraints who require the guarantee of service quality in the worst cases. We characterized this problem by using activity-based user model, in which the relationship between the successive transfers generated from activities was considered. In many settings, this problem includes many non-linear constraints. Thus we take an approach that does not use the ILP, but uses graph algorithms.

In such a setting, we characterize a class of RC-feasible mechanisms that strictly keep both, space--time prism constraints of customers and capacity constraints of traffic resources. We also showed the RC-optimal mechanism that maximizes the discounted social welfare by keeping these constraints.
Based on the RC-optimal and RC-feasible mechanisms, we propose a floating booking system that allocates traffic resources efficiently by rationally reallocating previously assigned traffic resources to late-coming high-value customers. We showed exact algorithms for the RC-optimal mechanism using the ZDD for both, myopic and non-myopic settings. We also showed multiple approximation algorithms that are RC-feasible. The proposed framework provides a wide range of options from a simple FCFS algorithm to the algorithm that achieves economically dynamic social optimal states in which discounted social welfare is maximized at each time. Thus, we can find an appropriate algorithm within a trade-off between computational costs and efficiency, depending on applications.

In several studies, we showed that our proposed model can keep more than trillions of combinatorial trip options of current and future agents in rational computational time, and can thus be used effectively in settings with high rejection rates, meaning that this mechanism can be used for focusing on the behavior of latent customers who have not used mobility services so far. Specifically, our proposed framework is effective as a demand prediction tool that can consider induced demands depending on the service quality.

We also showed that our proposed \emph{Per agent} approximation algorithm when introduced to the non-myopic algorithm, is effective. Commonly, the operator can achieve high efficiency by treating agents reporting in certain durations simultaneously in its decision-making, but it requires high computational costs. Our proposed mechanism solves this problem by making decisions at each agent using information from multiple agents. This approach is effective in designing mobility services with a fixed small number of customers, such as car-sharing systems with specified residents. In such settings, the space--time prism constraints of customers can be highly predictable by observing the daily active patterns of specified customers and thus, the non-myopic algorithm is suitable.

Moreover, we showed that the \emph{Per agent} approximation algorithm introduced to the myopic algorithm performs much better than the common FCFS mechanism does, in settings where the type of agents changes dynamically. 
This approach is effective in designing mobility services for unspecified many customers, such as shared taxi systems in large cities. In such settings, our proposed floating booking system can provide a lot of flexibility to customers. Customers can arbitrarily change the booking as far as it does not bother other users, and in case the change is rejected, the space--time constraints that the customer originally registered are still guaranteed.

Our proposed framework brings up various directions for discussion.
The first is the discussion on the speculative behavior of customers. In the floating type of bookings, customers can be better-off by reporting untruthful demand on purpose and thus, the pricing algorithm that promotes customers' truthful reports should be discussed. To address these problems, \citet{bergemann2010dynamic} proposed the pivot mechanism by which strategy-proofness of users is guaranteed, in that customers can never be better-off by misreporting. Our proposed framework that describes decision-making of all agents at a normalized time-step is consistent with this mechanism at the point that both of them are based on a basic economic model. Indeed, we can establish the pivot mechanism by combining our proposed non-myopic RC-optimal mechanism that maximizes the expected discounted social welfare and the well-defined pivot pricing\citep{bergemann2010dynamic}. Given that, we can design a mobility system that achieves social optimal states by the best response strategy of self-interested customers \citep{hayakawa2018auction}.
In considering the pricing algorithm, it is natural to focus not only on social welfare, but also on the profit of the operator. In MaaS systems, the operator may not hold a fixed number of traffic resources but procure them flexibly. In such a setting, the combined procurement, pricing, and allocation problems should be considered, which bring up a trade-off between profit of the operator and benefit of its customers \citep{Hayakawa2018AAMAS, Hayakawa2015}. 
 
The second is the discussion on the applications. Our proposed framework presents the possibilities of many emerging applications. One instance is the electric vehicle (EV) dispatch problems for car-sharing or ride-sharing services. In such systems, the space--time constraints of vehicles should also be considered. EVs have a limited cruising distance depending on their battery capacities and thus, they have to return to a charging station before the battery runs out. Thus, the executable trajectory of vehicles is expressed in space--time prism constraints based on a series of customers and charging stations. The problem is formulated as a combination of the vehicle assignment to customers, routing, and selecting charging stations problems. This setting provides a lot of non-linear constraints and thus, our proposed framework is effective. So far, many studies have discussed this type of a problem based on trip-based formulation and not based on activity-based formulation. If it only considers myopic decision-making based on the emerged deterministic demands, the myopic trip-based approach can be reasonable. However, to provide non-myopic decision-making based on the elastic and potentially hidden demands, our proposed framework with the activity-based model will play an important role.

The third is the discussion on the various types of space--time prism constraints. \citet{hagerstraand1970people} introduced three types of space--time prism constraints, namely, ``capability constraints'', ``coupling constraints'' and ``authority constraints.'' To make the discussion simple, we only refer to the ``capability constraints'' in this paper. However, by introducing two other types of constraints, our proposed framework becomes more resourceful. Considering ``coupling constraints'' between people and people, we can express the activities with families and friends. Our proposed framework can easily be extended to such situations by introducing reward functions depending on the person doing the same activities. Considering ``coupling constraints'' between people and objects, such as vehicles, machines, and so on, we can consider another type of application. The EV application stated above is an instance considering the ``coupling constraints'' between people and vehicles. In addition, we can extend the model to logistical problems by introducing this type of constraints. Finally, if we introduce ``authority constraints,'' we can discuss far wider problems. For example, it can be used as a tool to design premium charge for limited priority members. It may also be useful to discuss the diversity of mobility services aiming to realize services for all people, including the elderly, children, handicapped persons, and so on. Emerging technologies in the coming age, including automated vehicles, can provide mobility services for people who cannot have their own driver's licenses, and our proposed framework can be extended to evaluate the possibility of such achievements by new services.

However, our study still has many limitations. The first and the most important problem to be discussed is the method to measure the utility of activities. 
In the earlier study, \citet{kitamura1997empiricall} presented the concept of ``temporal utility profiles'' of activities and travel, and empirically discussed it using data obtained in the San Diego Zoo, and after that, many studies approached this problem. With recent progress in information technologies, a data driven approach for this problem is promising.
Another limitation to be discussed is solution algorithms to solve the problems more efficiently. Depending on applications, the RC-feasible algorithms should be refined and the specific algorithms for this should be explored. For example, $A^{\ast}$-based algorithms considered in MAPF settings can potentially be expanded to our settings.
\color{black}
In the future, we plan to explore these problems.





\end{document}